\documentclass[11pt,a4paper]{article}
\usepackage[margin=23mm]{geometry}
\usepackage{lmodern}
\usepackage{amsmath,amssymb,amsthm,mathtools,bm}
\usepackage{booktabs,longtable,array}
\usepackage{colortbl}
\usepackage{graphicx}
\usepackage{xcolor}
\usepackage{float}
\usepackage{url}
\usepackage{etoolbox}
\usepackage{hyperref}
\usepackage{bookmark}
\hypersetup{
  unicode=true,
  pdfencoding=auto,
  psdextra=true,
  colorlinks=true,
  linkcolor=blue!55!black,
  urlcolor=blue!55!black,
  citecolor=blue!55!black
}
\newcolumntype{L}[1]{>{\raggedright\arraybackslash}p{#1}}

\newcommand{\F}{\mathbb{F}}
\newcommand{\HX}{H_X}
\newcommand{\HZ}{H_Z}
\newcommand{\HXL}{H_X^{\mathrm{lift}}}
\newcommand{\HZL}{H_Z^{\mathrm{lift}}}
\newcommand{\vect}[1]{\bm{#1}}
\newcommand{\rank}{\operatorname{rank}}
\newcommand{\wt}{\operatorname{wt}}
\newcommand{\row}{\operatorname{row}}
\newcommand{\kerop}{\operatorname{ker}}

\newtheoremstyle{uprighttheorem}
  {0.45em}{0.45em}{\normalfont}{}{\bfseries}{.}{0.5em}{}
\theoremstyle{uprighttheorem}
\newtheorem{definition}{Definition}
\newtheorem{theorem}{Theorem}

\newtheorem{proposition}{Proposition}
\newtheorem{example}{Example}
\AtEndEnvironment{definition}{\hfill\ensuremath{\square}}
\AtEndEnvironment{theorem}{\hfill\ensuremath{\square}}
\AtEndEnvironment{lemma}{\hfill\ensuremath{\square}}
\AtEndEnvironment{proposition}{\hfill\ensuremath{\square}}
\AtEndEnvironment{example}{\hfill\ensuremath{\square}}
\title{A Two-Branch Finite-Field Construction for Regular CSS LDPC Bases}
\author{Koki Okada and Kenta Kasai\\Institute of Science Tokyo\\\texttt{okada.k.3154@m.isct.ac.jp}\\\texttt{kenta@ict.eng.isct.ac.jp}}
\date{}

\begin{document}
\maketitle

\begin{abstract}
This paper develops a two-branch multiplicative-coset construction for regular Calderbank--Shor--Steane (CSS) quantum low-density parity-check base matrices.
For a target column weight \(J\) and an even row weight \(L\), the method reduces regularity, CSS orthogonality, and same-type 4-cycle exclusion to explicit quotient-coset conditions over a finite field.
A normalized exhaustive search for these conditions produces base matrices for several \((J,L)\) pairs, so the construction is not tied to a single degree distribution.
The construction separates the finite-length design into two stages: the base matrix fixes the degree distribution and the first girth constraints, and a cyclic lift randomizes edge connections subject to exact algebraic checks.
As a detailed example, we carry one \((3,10)\)-regular base through the lift and decoding stages.
For this example, the selected 64-fold lift gives a code whose same-type Tanner graphs have girth at least eight, and it also excludes a specified weight-16 nondegenerate logical-support orbit.
Complete support enumeration excludes all nontrivial logical representatives of weight at most 16 on both sides. The resulting instance is a \([[10240,4108,\,18\le d\le32]]\) CSS code.
For decoding, we use joint log-domain belief propagation together with low-complexity deterministic post-processing rules for small residual syndromes, including repairs for residual patterns with two unsatisfied checks.
The frame error rate (FER) measurements provide finite-length decoding data for this detailed example; at depolarizing probability \(p=0.058\), the post-processing FER is \(1.0\times10^{-7}\).
\end{abstract}

\section{Introduction}
Low-density parity-check (LDPC) codes were introduced as sparse parity-check codes for classical error correction \cite{gallager1962ldpc}.
Their Tanner graphs support iterative message-passing decoders, and this sparse-graph viewpoint became one of the standard foundations of modern coding theory.
Classical finite-length LDPC design can be viewed as the simultaneous control of three quantities: degree distribution, girth, and randomness in the edge connections.
For a fixed channel and update rule, the variable-node and check-node degree distributions determine the belief-propagation (BP) threshold predicted by density evolution \cite{richardson2001capacity}.
This made degree-distribution design a central part of LDPC code design, especially for irregular ensembles approaching capacity \cite{richardson2001design}.
At the same time, finite-length decoding is sensitive to short cycles, so the girth and the number of short cycles are also structural design targets.
Finally, a favorable degree distribution and a large local girth are not enough if the finite graph still contains deterministic short-cycle or low-weight structures that degrade decoding.
The classical ensemble argument therefore relies on randomness in the edge connections: random edge connections, or equivalently random lifts of a base graph, reduce deterministic finite-length structures and make the finite graph locally resemble the ensemble used in BP threshold analysis \cite{kelley2008ldpc}.
The same requirement appears in recent quantum LDPC construction work: after the degree distribution and orthogonality constraints are fixed, additional randomization or randomized local modification is used to avoid deterministic finite-length structures while preserving the sparse CSS constraints \cite{okada2025random}.
Quantum error correction introduced stabilizer codes and, within them, Calderbank--Shor--Steane (CSS) codes \cite{calderbank1996good,steane1996multiple}.
In the CSS setting, one does not choose a single sparse parity-check matrix.
Instead, one chooses two sparse binary check matrices whose row spaces must satisfy a commutation condition.
Recent finite-length quantum LDPC work has studied this constraint from several complementary directions, including non-binary extensions, permutation-based high-girth constructions, degeneracy-aware decoding, and CSS constructions with performance close to coding-theoretic or density-evolution reference points \cite{komoto2025codingbound,kasai2025errorfloor,kasai2025degeneracy,kasai2026orthogonality}.

Quantum LDPC codes inherit the sparse-graph motivation of classical LDPC coding, but they also inherit this additional commutation requirement.
Early sparse-graph constructions and iterative decoders established the basic finite-length setting \cite{mackay2004sparse,poulin2008iterative}.
At the same time, explicit algebraic constructions already showed that commutation is difficult to reconcile with the graph properties that are desirable for iterative decoding, such as regularity and the absence of short cycles \cite{hagiwara2007qc,aly2007finitegeometry}.
This issue has continued in more recent quasi-cyclic, affine-permutation, and non-binary CSS constructions, where the support structure, field labels, and permutation choices are designed so that orthogonality and short-cycle constraints remain compatible \cite{komoto2025columnweight2,kasai2025girth16apm,kasai2025nonbinaryextension}.
This commutation constraint remains a central design restriction.
Thus, even before optimizing a finite-length decoder implementation, one has to decide which degree distribution to realize and then construct a sparse CSS matrix pair that realizes it without violating commutation.
In the present design, randomness is introduced in two controlled stages.
At the base-matrix stage, the finite-field construction gives many coefficient choices, and the search selects a base that has the target degrees, CSS orthogonality, and no same-type 4-cycles.
At the lift stage, cyclic lift labels are sampled and then checked so that the remaining base 6-cycles do not close and the specified low-weight support pattern is excluded.
This separates the roles of the two stages: the base fixes the degree distribution and the first girth constraints, while the lift supplies randomized edge connections subject to exact algebraic checks.

One major line of work addressed asymptotic parameters.
Hypergraph-product codes gave a systematic positive-rate construction with square-root distance scaling \cite{tillich2014quantum}.
Subsequent product and lifted constructions improved the asymptotic picture, first by exceeding square-root distance scaling and then by producing asymptotically good quantum LDPC codes \cite{hastings2021quantum_ldpc,panteleev2022asymptotically,leverrier2022quantum}.
Recent fixed-degree random CSS constructions also connect the sparse quantum LDPC setting with Gilbert--Varshamov-type distance guarantees \cite{kasai2026gv}.
These developments established that sparse quantum codes can simultaneously achieve strong asymptotic rate and distance guarantees.

Another line of work remained focused on finite block lengths, where the main questions are different.
At finite length, one must specify an actual sparse matrix pair, verify the commutation condition, control short cycles in the Tanner graphs, identify low-weight logical operators or certified candidates, and evaluate a concrete decoder.
In this regime, decoder behavior depends on the interaction between graph structure, degeneracy, and post-processing rules \cite{panteleev2021degenerate,breuckmann2021qldpc}.
For that reason, finite-length construction work often proceeds by combining algebraic design, explicit certification, and numerical decoding experiments.

The present paper belongs to this finite-length line.
Recent work around this finite-length program separates several tasks that are often combined in direct sparse CSS design: constructing a regular commuting base, adding randomness without destroying orthogonality, certifying short-cycle and low-weight structures, and evaluating belief-propagation-based decoding on a concrete lifted code \cite{okada2025random,kasai2026orthogonality,kasai2026heuristic,okada2026highgirth,okada2026squarebase,kasai2026factorgraph}.
The decoding side of the same program includes empirical studies of sharp finite-length transitions and residual error-floor events under joint BP decoding \cite{komoto2025sharp}.
These cited finite-length constructions provide the immediate context for the present paper, but the specific design used here is different from the previously studied non-binary, affine-permutation, affine-coset, and square-base hypergraph-product constructions.

The detailed example targets a CSS code with column weight three and row weight ten.
This degree pair has different constraints from the previously studied cases with the same column weight and row weights eight and twelve.
Compared with row weight eight, row weight ten requires more incidences in each check while still controlling same-type short cycles.
Compared with row weight twelve, row weight ten leaves less freedom for arranging the paired cross-type overlaps that enforce CSS orthogonality.
The search for this case led to a more general finite-field construction of regular CSS bases.
Thus the main design problem in this paper is not only to realize one base with column weight three and row weight ten, but to give a verifiable base-construction principle for many choices of column and row weights.

Our solution is a two-branch multiplicative-coset construction over a finite field.
The construction is stated in a field-independent form first, so that the reason for regularity and the sufficient coset conditions used to certify orthogonality and same-type 4-cycle exclusion are explicit.
The resulting quotient-coset conditions can be checked by a finite normalized search, and the paper records coefficient examples for several regularities.
Among these bases, we select the instance with column weight three and row weight ten as the detailed example and carry it through the finite-length stages: specialization to the field with sixteen elements, selection of a 64-fold cyclic lift, exclusion of the remaining same-type base 6-cycles, and exclusion of a specified weight-16 logical-support orbit.
The decoding section then studies this concrete lifted code under joint BP and deterministic post-processing, using the factor-graph formulation in \cite{kasai2026factorgraph}.

The position of this paper in the historical development is therefore specific.
It does not propose a new asymptotic family, and it does not claim an exact distance theorem for a broad class of codes.
Instead, it gives a general finite-field base construction, shows that the construction produces many regular CSS bases, and then treats the base with column weight three and row weight ten as the detailed finite-length example.
Choosing lifts, certifying distance, and measuring finite-length decoding performance for the other regularities are left as future work.

\section{General Theory of the Two-Branch Finite-Field Base}\label{sec:general-theory}
This section gives the field-independent part of the construction.
The purpose is to define the two-branch finite-field incidence rule and to record the general certificates used later.
The main outputs of the section are \((J,L)\) regularity, a sufficient coset condition for CSS orthogonality, and a separate sufficient coset condition for excluding same-type 4-cycles.

\subsection{Base Incidence Rule}\label{subsec:base-incidence-rule}

The first construction stage builds the base matrices $\HX$ and $\HZ$.
This is the part of the design where the incidence pattern is chosen explicitly.
Before specializing to a concrete field, we record the general incidence rule that specifies the rows, columns, and nonzero positions of the two base matrices.
The construction uses translations of a multiplicative subgroup in a finite field to place the ones in $\HX$ and $\HZ$.
Two branches are used because CSS orthogonality is enforced by pairing the $X$-$Z$ overlaps of branch \(0\) and branch \(1\), while same-type 4-cycles are avoided by requiring the corresponding same-type difference cosets to be disjoint.
The column weight is denoted by $J$, and the row weight is denoted by $L$.
The row weight is controlled by the subgroup size: each branch contributes one incident column for each element of $M$, so the two-branch construction gives
\[
  L=2|M|.
\]
Thus, to target an even row weight $L$, one seeks a finite field whose multiplicative group contains a subgroup of order $L/2$.

\begin{definition}[Two-branch multiplicative-coset base]\label{def:two-branch-coset-base}
Let $\F$ be a finite field and let $M$ be a multiplicative subgroup of $\F^\times$.
Fix a branch index $\lambda\in\{0,1\}$ and a column weight $J$.
For each branch, choose coefficients
\[
  a_0^{(\lambda)},\ldots,a_{J-1}^{(\lambda)}\in\F,\qquad
  b_0^{(\lambda)},\ldots,b_{J-1}^{(\lambda)}\in\F.
\]
The columns are indexed by
\[
  (\lambda,t,h)\in\{0,1\}\times\F\times M.
\]
We call $\lambda$ the branch index, $t$ the translation coordinate, and $h$ the subgroup coordinate of the base column.
The $X$ rows and $Z$ rows are indexed by
\[
  (i,r),\quad i=0,\ldots,J-1,\ r\in\F,
\]
and
\[
  (j,s),\quad j=0,\ldots,J-1,\ s\in\F,
\]
respectively.
Here $i$ and $j$ are row-group indices, while $r$ and $s$ are field-position coordinates.
Define binary matrices $\HX$ and $\HZ$ with these row and column index sets as follows.
For brevity, put
\[
  m_0=m_X=m_Z=J|\F|.
\]
Both matrices have $m_0$ rows and
\[
  2|\F||M|
\]
columns.
When the matrices are displayed or stored as ordinary arrays, fix enumerations
\[
  \F=\{\alpha_0,\ldots,\alpha_{q-1}\},\qquad
  M=\{\mu_0,\ldots,\mu_{m-1}\},
\]
where \(q=|\F|\) and \(m=|M|\), and order the branch indices as
\(0,1\).
The zero-based global row and column coordinates are then
\begin{equation}
\label{eq:global-row-coordinate}
  \rho_X(i,\alpha_u)=iq+u,\qquad
  \rho_Z(j,\alpha_u)=jq+u,
\end{equation}
and
\begin{equation}
\label{eq:global-column-coordinate}
  \kappa(\lambda,\alpha_u,\mu_v)
  =\lambda qm+um+v.
\end{equation}
For one-based printed row or column numbers, add \(1\) to these coordinates.
In $\HX$, column $(\lambda,t,h)$ has ones in the rows
\begin{equation}
\label{eq:hx-incidence}
  (i,\ t+a_i^{(\lambda)}h)\qquad (i=0,\ldots,J-1)
\end{equation}
and zeros elsewhere.
In $\HZ$, column $(\lambda,t,h)$ has ones in the rows
\begin{equation}
\label{eq:hz-incidence}
  (j,\ t+b_j^{(\lambda)}h)\qquad (j=0,\ldots,J-1)
\end{equation}
and zeros elsewhere.
Equivalently, these equations specify the $X$-side and $Z$-side incidences of the base Tanner graphs.
\end{definition}

The next example is included only to make the indexing convention concrete.
It is not used as a later code instance; it shows how the abstract row, column, and subgroup coordinates become ordinary binary matrices.

\begin{example}[Small \(J=3\) two-branch base]\label{ex:small-two-branch-base}
The following small certified instance has \(J=3\), matching the column weight used in the detailed \((3,10)\) example.
Take
\[
  \F=\F_7,\qquad M=\{1,2,4\}\subset\F_7^\times,\qquad J=3.
\]
Set
\[
  \bm a^{(0)}=(0,1,3),\quad
  \bm b^{(0)}=(2,4,5),\quad
  \bm a^{(1)}=(0,3,1),\quad
  \bm b^{(1)}=(4,2,5),
\]
with entries interpreted modulo \(7\).
Then \(m_0=21\), the base length is \(42\), and the row weight is \(L=2|M|=6\).
These coefficients satisfy the coset certificates in Theorems~\ref{thm:orthogonality-coset-condition} and~\ref{thm:same-type-4cycle-coset-condition}.
For example, consider the column indexed by \((\lambda,t,h)=(0,0,1)\).
Using the incidence rule in Definition~\ref{def:two-branch-coset-base}, this column has ones in $\HX$ at
\[
  (0,0+0\cdot1),\quad (1,0+1\cdot1),\quad (2,0+3\cdot1),
\]
namely at the rows \((0,0),(1,1),(2,3)\).
In the displayed \(\HX^{\mathrm{ex}}\) matrix below, this is the first column, and the corresponding entries are highlighted.
The same column has ones in $\HZ$ at
\[
  (0,0+2\cdot1),\quad (1,0+4\cdot1),\quad (2,0+5\cdot1),
\]
namely at the rows \((0,2),(1,4),(2,5)\).
In the displayed \(\HZ^{\mathrm{ex}}\) matrix, the corresponding entries in the same first column are highlighted.
For the displayed arrays, use the enumerations
\(\alpha_u=u\) for \(\F_7\) and
\((\mu_0,\mu_1,\mu_2)=(1,2,4)\) for \(M\).
With the global coordinates in \eqref{eq:global-row-coordinate} and
\eqref{eq:global-column-coordinate}, the resulting binary matrices are
\begin{center}
\begingroup
\scriptsize
\setlength{\arraycolsep}{0.58pt}
\renewcommand{\arraystretch}{0.70}
\newcommand{\mzero}{\textcolor{black!15}{0}}
\newcommand{\mxone}{\cellcolor[HTML]{1F77B4}\textcolor{white}{1}}
\newcommand{\myone}{\cellcolor[HTML]{E07A2F}\textcolor{white}{1}}
\newcommand{\mzone}{\cellcolor[HTML]{2CA25F}\textcolor{white}{1}}
\newcommand{\mhit}{\cellcolor[HTML]{FFE066}\textcolor{black}{\mathbf{1}}}
\newcommand{\msep}{\vrule width 0.25pt}
\newcommand{\mlayersep}{\vrule width 0.8pt}
\[
\HX^{\mathrm{ex}}=
\left[
\begin{array}{ccc!{\msep}ccc!{\msep}ccc!{\msep}ccc!{\msep}ccc!{\msep}ccc!{\msep}ccc!{\mlayersep}ccc!{\msep}ccc!{\msep}ccc!{\msep}ccc!{\msep}ccc!{\msep}ccc!{\msep}ccc}
\mhit&\mxone&\mxone&\mzero&\mzero&\mzero&\mzero&\mzero&\mzero&\mzero&\mzero&\mzero&\mzero&\mzero&\mzero&\mzero&\mzero&\mzero&\mzero&\mzero&\mzero&\mxone&\mxone&\mxone&\mzero&\mzero&\mzero&\mzero&\mzero&\mzero&\mzero&\mzero&\mzero&\mzero&\mzero&\mzero&\mzero&\mzero&\mzero&\mzero&\mzero&\mzero\\
\mzero&\mzero&\mzero&\mxone&\mxone&\mxone&\mzero&\mzero&\mzero&\mzero&\mzero&\mzero&\mzero&\mzero&\mzero&\mzero&\mzero&\mzero&\mzero&\mzero&\mzero&\mzero&\mzero&\mzero&\mxone&\mxone&\mxone&\mzero&\mzero&\mzero&\mzero&\mzero&\mzero&\mzero&\mzero&\mzero&\mzero&\mzero&\mzero&\mzero&\mzero&\mzero\\
\mzero&\mzero&\mzero&\mzero&\mzero&\mzero&\mxone&\mxone&\mxone&\mzero&\mzero&\mzero&\mzero&\mzero&\mzero&\mzero&\mzero&\mzero&\mzero&\mzero&\mzero&\mzero&\mzero&\mzero&\mzero&\mzero&\mzero&\mxone&\mxone&\mxone&\mzero&\mzero&\mzero&\mzero&\mzero&\mzero&\mzero&\mzero&\mzero&\mzero&\mzero&\mzero\\
\mzero&\mzero&\mzero&\mzero&\mzero&\mzero&\mzero&\mzero&\mzero&\mxone&\mxone&\mxone&\mzero&\mzero&\mzero&\mzero&\mzero&\mzero&\mzero&\mzero&\mzero&\mzero&\mzero&\mzero&\mzero&\mzero&\mzero&\mzero&\mzero&\mzero&\mxone&\mxone&\mxone&\mzero&\mzero&\mzero&\mzero&\mzero&\mzero&\mzero&\mzero&\mzero\\
\mzero&\mzero&\mzero&\mzero&\mzero&\mzero&\mzero&\mzero&\mzero&\mzero&\mzero&\mzero&\mxone&\mxone&\mxone&\mzero&\mzero&\mzero&\mzero&\mzero&\mzero&\mzero&\mzero&\mzero&\mzero&\mzero&\mzero&\mzero&\mzero&\mzero&\mzero&\mzero&\mzero&\mxone&\mxone&\mxone&\mzero&\mzero&\mzero&\mzero&\mzero&\mzero\\
\mzero&\mzero&\mzero&\mzero&\mzero&\mzero&\mzero&\mzero&\mzero&\mzero&\mzero&\mzero&\mzero&\mzero&\mzero&\mxone&\mxone&\mxone&\mzero&\mzero&\mzero&\mzero&\mzero&\mzero&\mzero&\mzero&\mzero&\mzero&\mzero&\mzero&\mzero&\mzero&\mzero&\mzero&\mzero&\mzero&\mxone&\mxone&\mxone&\mzero&\mzero&\mzero\\
\mzero&\mzero&\mzero&\mzero&\mzero&\mzero&\mzero&\mzero&\mzero&\mzero&\mzero&\mzero&\mzero&\mzero&\mzero&\mzero&\mzero&\mzero&\mxone&\mxone&\mxone&\mzero&\mzero&\mzero&\mzero&\mzero&\mzero&\mzero&\mzero&\mzero&\mzero&\mzero&\mzero&\mzero&\mzero&\mzero&\mzero&\mzero&\mzero&\mxone&\mxone&\mxone\\ \noalign{\hrule height 0.8pt}
\mzero&\mzero&\mzero&\mzero&\mzero&\mzero&\mzero&\mzero&\mzero&\mzero&\mzero&\myone&\mzero&\mzero&\mzero&\mzero&\myone&\mzero&\myone&\mzero&\mzero&\mzero&\mzero&\mzero&\mzero&\myone&\mzero&\mzero&\mzero&\myone&\mzero&\mzero&\mzero&\myone&\mzero&\mzero&\mzero&\mzero&\mzero&\mzero&\mzero&\mzero\\
\mhit&\mzero&\mzero&\mzero&\mzero&\mzero&\mzero&\mzero&\mzero&\mzero&\mzero&\mzero&\mzero&\mzero&\myone&\mzero&\mzero&\mzero&\mzero&\myone&\mzero&\mzero&\mzero&\mzero&\mzero&\mzero&\mzero&\mzero&\myone&\mzero&\mzero&\mzero&\myone&\mzero&\mzero&\mzero&\myone&\mzero&\mzero&\mzero&\mzero&\mzero\\
\mzero&\myone&\mzero&\myone&\mzero&\mzero&\mzero&\mzero&\mzero&\mzero&\mzero&\mzero&\mzero&\mzero&\mzero&\mzero&\mzero&\myone&\mzero&\mzero&\mzero&\mzero&\mzero&\mzero&\mzero&\mzero&\mzero&\mzero&\mzero&\mzero&\mzero&\myone&\mzero&\mzero&\mzero&\myone&\mzero&\mzero&\mzero&\myone&\mzero&\mzero\\
\mzero&\mzero&\mzero&\mzero&\myone&\mzero&\myone&\mzero&\mzero&\mzero&\mzero&\mzero&\mzero&\mzero&\mzero&\mzero&\mzero&\mzero&\mzero&\mzero&\myone&\myone&\mzero&\mzero&\mzero&\mzero&\mzero&\mzero&\mzero&\mzero&\mzero&\mzero&\mzero&\mzero&\myone&\mzero&\mzero&\mzero&\myone&\mzero&\mzero&\mzero\\
\mzero&\mzero&\myone&\mzero&\mzero&\mzero&\mzero&\myone&\mzero&\myone&\mzero&\mzero&\mzero&\mzero&\mzero&\mzero&\mzero&\mzero&\mzero&\mzero&\mzero&\mzero&\mzero&\mzero&\myone&\mzero&\mzero&\mzero&\mzero&\mzero&\mzero&\mzero&\mzero&\mzero&\mzero&\mzero&\mzero&\myone&\mzero&\mzero&\mzero&\myone\\
\mzero&\mzero&\mzero&\mzero&\mzero&\myone&\mzero&\mzero&\mzero&\mzero&\myone&\mzero&\myone&\mzero&\mzero&\mzero&\mzero&\mzero&\mzero&\mzero&\mzero&\mzero&\mzero&\myone&\mzero&\mzero&\mzero&\myone&\mzero&\mzero&\mzero&\mzero&\mzero&\mzero&\mzero&\mzero&\mzero&\mzero&\mzero&\mzero&\myone&\mzero\\
\mzero&\mzero&\mzero&\mzero&\mzero&\mzero&\mzero&\mzero&\myone&\mzero&\mzero&\mzero&\mzero&\myone&\mzero&\myone&\mzero&\mzero&\mzero&\mzero&\mzero&\mzero&\myone&\mzero&\mzero&\mzero&\myone&\mzero&\mzero&\mzero&\myone&\mzero&\mzero&\mzero&\mzero&\mzero&\mzero&\mzero&\mzero&\mzero&\mzero&\mzero\\ \noalign{\hrule height 0.8pt}
\mzero&\mzero&\mzero&\mzero&\mzone&\mzero&\mzero&\mzero&\mzone&\mzero&\mzero&\mzero&\mzone&\mzero&\mzero&\mzero&\mzero&\mzero&\mzero&\mzero&\mzero&\mzero&\mzero&\mzero&\mzero&\mzero&\mzero&\mzero&\mzero&\mzero&\mzero&\mzero&\mzone&\mzero&\mzero&\mzero&\mzero&\mzone&\mzero&\mzone&\mzero&\mzero\\
\mzero&\mzero&\mzero&\mzero&\mzero&\mzero&\mzero&\mzone&\mzero&\mzero&\mzero&\mzone&\mzero&\mzero&\mzero&\mzone&\mzero&\mzero&\mzero&\mzero&\mzero&\mzone&\mzero&\mzero&\mzero&\mzero&\mzero&\mzero&\mzero&\mzero&\mzero&\mzero&\mzero&\mzero&\mzero&\mzone&\mzero&\mzero&\mzero&\mzero&\mzone&\mzero\\
\mzero&\mzero&\mzero&\mzero&\mzero&\mzero&\mzero&\mzero&\mzero&\mzero&\mzone&\mzero&\mzero&\mzero&\mzone&\mzero&\mzero&\mzero&\mzone&\mzero&\mzero&\mzero&\mzone&\mzero&\mzone&\mzero&\mzero&\mzero&\mzero&\mzero&\mzero&\mzero&\mzero&\mzero&\mzero&\mzero&\mzero&\mzero&\mzone&\mzero&\mzero&\mzero\\
\mhit&\mzero&\mzero&\mzero&\mzero&\mzero&\mzero&\mzero&\mzero&\mzero&\mzero&\mzero&\mzero&\mzone&\mzero&\mzero&\mzero&\mzone&\mzero&\mzero&\mzero&\mzero&\mzero&\mzero&\mzero&\mzone&\mzero&\mzone&\mzero&\mzero&\mzero&\mzero&\mzero&\mzero&\mzero&\mzero&\mzero&\mzero&\mzero&\mzero&\mzero&\mzone\\
\mzero&\mzero&\mzero&\mzone&\mzero&\mzero&\mzero&\mzero&\mzero&\mzero&\mzero&\mzero&\mzero&\mzero&\mzero&\mzero&\mzone&\mzero&\mzero&\mzero&\mzone&\mzero&\mzero&\mzone&\mzero&\mzero&\mzero&\mzero&\mzone&\mzero&\mzone&\mzero&\mzero&\mzero&\mzero&\mzero&\mzero&\mzero&\mzero&\mzero&\mzero&\mzero\\
\mzero&\mzero&\mzone&\mzero&\mzero&\mzero&\mzone&\mzero&\mzero&\mzero&\mzero&\mzero&\mzero&\mzero&\mzero&\mzero&\mzero&\mzero&\mzero&\mzone&\mzero&\mzero&\mzero&\mzero&\mzero&\mzero&\mzone&\mzero&\mzero&\mzero&\mzero&\mzone&\mzero&\mzone&\mzero&\mzero&\mzero&\mzero&\mzero&\mzero&\mzero&\mzero\\
\mzero&\mzone&\mzero&\mzero&\mzero&\mzone&\mzero&\mzero&\mzero&\mzone&\mzero&\mzero&\mzero&\mzero&\mzero&\mzero&\mzero&\mzero&\mzero&\mzero&\mzero&\mzero&\mzero&\mzero&\mzero&\mzero&\mzero&\mzero&\mzero&\mzone&\mzero&\mzero&\mzero&\mzero&\mzone&\mzero&\mzone&\mzero&\mzero&\mzero&\mzero&\mzero
\end{array}
\right],
\]

\vspace{-0.2em}

\[
\HZ^{\mathrm{ex}}=
\left[
\begin{array}{ccc!{\msep}ccc!{\msep}ccc!{\msep}ccc!{\msep}ccc!{\msep}ccc!{\msep}ccc!{\mlayersep}ccc!{\msep}ccc!{\msep}ccc!{\msep}ccc!{\msep}ccc!{\msep}ccc!{\msep}ccc}
\mzero&\mzero&\mzero&\mzero&\mzero&\mzero&\mzero&\mzero&\mzero&\mzero&\mxone&\mzero&\mzero&\mzero&\mzero&\mxone&\mzero&\mzero&\mzero&\mzero&\mxone&\mzero&\mzero&\mzero&\mzero&\mzero&\mzero&\mzero&\mzero&\mzero&\mxone&\mzero&\mzero&\mzero&\mzero&\mzero&\mzero&\mzero&\mxone&\mzero&\mxone&\mzero\\
\mzero&\mzero&\mxone&\mzero&\mzero&\mzero&\mzero&\mzero&\mzero&\mzero&\mzero&\mzero&\mzero&\mxone&\mzero&\mzero&\mzero&\mzero&\mxone&\mzero&\mzero&\mzero&\mxone&\mzero&\mzero&\mzero&\mzero&\mzero&\mzero&\mzero&\mzero&\mzero&\mzero&\mxone&\mzero&\mzero&\mzero&\mzero&\mzero&\mzero&\mzero&\mxone\\
\mhit&\mzero&\mzero&\mzero&\mzero&\mxone&\mzero&\mzero&\mzero&\mzero&\mzero&\mzero&\mzero&\mzero&\mzero&\mzero&\mxone&\mzero&\mzero&\mzero&\mzero&\mzero&\mzero&\mxone&\mzero&\mxone&\mzero&\mzero&\mzero&\mzero&\mzero&\mzero&\mzero&\mzero&\mzero&\mzero&\mxone&\mzero&\mzero&\mzero&\mzero&\mzero\\
\mzero&\mzero&\mzero&\mxone&\mzero&\mzero&\mzero&\mzero&\mxone&\mzero&\mzero&\mzero&\mzero&\mzero&\mzero&\mzero&\mzero&\mzero&\mzero&\mxone&\mzero&\mzero&\mzero&\mzero&\mzero&\mzero&\mxone&\mzero&\mxone&\mzero&\mzero&\mzero&\mzero&\mzero&\mzero&\mzero&\mzero&\mzero&\mzero&\mxone&\mzero&\mzero\\
\mzero&\mxone&\mzero&\mzero&\mzero&\mzero&\mxone&\mzero&\mzero&\mzero&\mzero&\mxone&\mzero&\mzero&\mzero&\mzero&\mzero&\mzero&\mzero&\mzero&\mzero&\mxone&\mzero&\mzero&\mzero&\mzero&\mzero&\mzero&\mzero&\mxone&\mzero&\mxone&\mzero&\mzero&\mzero&\mzero&\mzero&\mzero&\mzero&\mzero&\mzero&\mzero\\
\mzero&\mzero&\mzero&\mzero&\mxone&\mzero&\mzero&\mzero&\mzero&\mxone&\mzero&\mzero&\mzero&\mzero&\mxone&\mzero&\mzero&\mzero&\mzero&\mzero&\mzero&\mzero&\mzero&\mzero&\mxone&\mzero&\mzero&\mzero&\mzero&\mzero&\mzero&\mzero&\mxone&\mzero&\mxone&\mzero&\mzero&\mzero&\mzero&\mzero&\mzero&\mzero\\
\mzero&\mzero&\mzero&\mzero&\mzero&\mzero&\mzero&\mxone&\mzero&\mzero&\mzero&\mzero&\mxone&\mzero&\mzero&\mzero&\mzero&\mxone&\mzero&\mzero&\mzero&\mzero&\mzero&\mzero&\mzero&\mzero&\mzero&\mxone&\mzero&\mzero&\mzero&\mzero&\mzero&\mzero&\mzero&\mxone&\mzero&\mxone&\mzero&\mzero&\mzero&\mzero\\ \noalign{\hrule height 0.8pt}
\mzero&\mzero&\mzero&\mzero&\mzero&\mzero&\mzero&\mzero&\mzero&\myone&\mzero&\mzero&\mzero&\mzero&\mzero&\mzero&\mzero&\myone&\mzero&\myone&\mzero&\mzero&\mzero&\mzero&\mzero&\mzero&\mzero&\mzero&\mzero&\mzero&\mzero&\myone&\mzero&\mzero&\mzero&\mzero&\myone&\mzero&\mzero&\mzero&\mzero&\myone\\
\mzero&\myone&\mzero&\mzero&\mzero&\mzero&\mzero&\mzero&\mzero&\mzero&\mzero&\mzero&\myone&\mzero&\mzero&\mzero&\mzero&\mzero&\mzero&\mzero&\myone&\mzero&\mzero&\myone&\mzero&\mzero&\mzero&\mzero&\mzero&\mzero&\mzero&\mzero&\mzero&\mzero&\myone&\mzero&\mzero&\mzero&\mzero&\myone&\mzero&\mzero\\
\mzero&\mzero&\myone&\mzero&\myone&\mzero&\mzero&\mzero&\mzero&\mzero&\mzero&\mzero&\mzero&\mzero&\mzero&\myone&\mzero&\mzero&\mzero&\mzero&\mzero&\myone&\mzero&\mzero&\mzero&\mzero&\myone&\mzero&\mzero&\mzero&\mzero&\mzero&\mzero&\mzero&\mzero&\mzero&\mzero&\myone&\mzero&\mzero&\mzero&\mzero\\
\mzero&\mzero&\mzero&\mzero&\mzero&\myone&\mzero&\myone&\mzero&\mzero&\mzero&\mzero&\mzero&\mzero&\mzero&\mzero&\mzero&\mzero&\myone&\mzero&\mzero&\mzero&\mzero&\mzero&\myone&\mzero&\mzero&\mzero&\mzero&\myone&\mzero&\mzero&\mzero&\mzero&\mzero&\mzero&\mzero&\mzero&\mzero&\mzero&\myone&\mzero\\
\mhit&\mzero&\mzero&\mzero&\mzero&\mzero&\mzero&\mzero&\myone&\mzero&\myone&\mzero&\mzero&\mzero&\mzero&\mzero&\mzero&\mzero&\mzero&\mzero&\mzero&\mzero&\myone&\mzero&\mzero&\mzero&\mzero&\myone&\mzero&\mzero&\mzero&\mzero&\myone&\mzero&\mzero&\mzero&\mzero&\mzero&\mzero&\mzero&\mzero&\mzero\\
\mzero&\mzero&\mzero&\myone&\mzero&\mzero&\mzero&\mzero&\mzero&\mzero&\mzero&\myone&\mzero&\myone&\mzero&\mzero&\mzero&\mzero&\mzero&\mzero&\mzero&\mzero&\mzero&\mzero&\mzero&\myone&\mzero&\mzero&\mzero&\mzero&\myone&\mzero&\mzero&\mzero&\mzero&\myone&\mzero&\mzero&\mzero&\mzero&\mzero&\mzero\\
\mzero&\mzero&\mzero&\mzero&\mzero&\mzero&\myone&\mzero&\mzero&\mzero&\mzero&\mzero&\mzero&\mzero&\myone&\mzero&\myone&\mzero&\mzero&\mzero&\mzero&\mzero&\mzero&\mzero&\mzero&\mzero&\mzero&\mzero&\myone&\mzero&\mzero&\mzero&\mzero&\myone&\mzero&\mzero&\mzero&\mzero&\myone&\mzero&\mzero&\mzero\\ \noalign{\hrule height 0.8pt}
\mzero&\mzero&\mzero&\mzero&\mzero&\mzone&\mzone&\mzero&\mzero&\mzero&\mzero&\mzero&\mzero&\mzone&\mzero&\mzero&\mzero&\mzero&\mzero&\mzero&\mzero&\mzero&\mzero&\mzero&\mzero&\mzero&\mzone&\mzone&\mzero&\mzero&\mzero&\mzero&\mzero&\mzero&\mzone&\mzero&\mzero&\mzero&\mzero&\mzero&\mzero&\mzero\\
\mzero&\mzero&\mzero&\mzero&\mzero&\mzero&\mzero&\mzero&\mzone&\mzone&\mzero&\mzero&\mzero&\mzero&\mzero&\mzero&\mzone&\mzero&\mzero&\mzero&\mzero&\mzero&\mzero&\mzero&\mzero&\mzero&\mzero&\mzero&\mzero&\mzone&\mzone&\mzero&\mzero&\mzero&\mzero&\mzero&\mzero&\mzone&\mzero&\mzero&\mzero&\mzero\\
\mzero&\mzero&\mzero&\mzero&\mzero&\mzero&\mzero&\mzero&\mzero&\mzero&\mzero&\mzone&\mzone&\mzero&\mzero&\mzero&\mzero&\mzero&\mzero&\mzone&\mzero&\mzero&\mzero&\mzero&\mzero&\mzero&\mzero&\mzero&\mzero&\mzero&\mzero&\mzero&\mzone&\mzone&\mzero&\mzero&\mzero&\mzero&\mzero&\mzero&\mzone&\mzero\\
\mzero&\mzone&\mzero&\mzero&\mzero&\mzero&\mzero&\mzero&\mzero&\mzero&\mzero&\mzero&\mzero&\mzero&\mzone&\mzone&\mzero&\mzero&\mzero&\mzero&\mzero&\mzero&\mzone&\mzero&\mzero&\mzero&\mzero&\mzero&\mzero&\mzero&\mzero&\mzero&\mzero&\mzero&\mzero&\mzone&\mzone&\mzero&\mzero&\mzero&\mzero&\mzero\\
\mzero&\mzero&\mzero&\mzero&\mzone&\mzero&\mzero&\mzero&\mzero&\mzero&\mzero&\mzero&\mzero&\mzero&\mzero&\mzero&\mzero&\mzone&\mzone&\mzero&\mzero&\mzero&\mzero&\mzero&\mzero&\mzone&\mzero&\mzero&\mzero&\mzero&\mzero&\mzero&\mzero&\mzero&\mzero&\mzero&\mzero&\mzero&\mzone&\mzone&\mzero&\mzero\\
\mhit&\mzero&\mzero&\mzero&\mzero&\mzero&\mzero&\mzone&\mzero&\mzero&\mzero&\mzero&\mzero&\mzero&\mzero&\mzero&\mzero&\mzero&\mzero&\mzero&\mzone&\mzone&\mzero&\mzero&\mzero&\mzero&\mzero&\mzero&\mzone&\mzero&\mzero&\mzero&\mzero&\mzero&\mzero&\mzero&\mzero&\mzero&\mzero&\mzero&\mzero&\mzone\\
\mzero&\mzero&\mzone&\mzone&\mzero&\mzero&\mzero&\mzero&\mzero&\mzero&\mzone&\mzero&\mzero&\mzero&\mzero&\mzero&\mzero&\mzero&\mzero&\mzero&\mzero&\mzero&\mzero&\mzone&\mzone&\mzero&\mzero&\mzero&\mzero&\mzero&\mzero&\mzone&\mzero&\mzero&\mzero&\mzero&\mzero&\mzero&\mzero&\mzero&\mzero&\mzero
\end{array}
\right].
\]
\endgroup
\end{center}

The colored backgrounds of the one entries indicate row groups, the yellow entries mark the column used in the preceding calculation, the thick vertical separator marks the branch boundary, the thin vertical separators mark the \(M\)-blocks, and the horizontal separator marks the row-group boundary.
The unit quantities represented by these separators are as follows: each row has weight \(2|M|\), and one branch block contains \(J|\F||M|\) one entries.
In this example, these numbers are \(6\) and \(63\), respectively.
\end{example}

\subsection{Regularity and Coset Certificates}\label{subsec:base-certificates}

This definition separates the combinatorial part of the construction from the choice of a particular field.
The field and subgroup determine the block length and row weight.
The next theorem records the first consequence of this construction rule: the desired regularity is automatic once the index sets have been fixed in this way.

\begin{theorem}[\texorpdfstring{$(J,L)$}{(J,L)} Regularity]\label{thm:two-branch-regularity}
Let $\HX$ and $\HZ$ be the binary matrices of Definition~\ref{def:two-branch-coset-base}.
Then $(\HX,\HZ)$ is $(J,L)$-regular with $L=2|M|$: every column of $\HX$ and $\HZ$ has weight $J$, and every row of $\HX$ and $\HZ$ has weight $L$.
\end{theorem}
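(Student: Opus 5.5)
The plan is a direct counting argument: fix a row or a column and count the incidences directly from the incidence rules \eqref{eq:hx-incidence} and \eqref{eq:hz-incidence}. The matrices $\HX$ and $\HZ$ have the same structure, with $a$ replaced by $b$, so I will argue for $\HX$ only and obtain $\HZ$ by the same argument.

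\emph{Column weights.} Fix a column $(\lambda,t,h)$. By \eqref{eq:hx-incidence}, it has ones exactly in the rows $(i,t+a_i^{(\lambda)}h)$ for $i=0,\ldots,J-1$. These $J$ rows lie in distinct row groups $i$, so they are pairwise distinct. No two of them can coincide and cancel, or collapse into a single entry, in the binary matrix. Hence every column has weight exactly $J$, and this holds for arbitrary coefficients, even when some $a_i^{(\lambda)}$ coincide or vanish.

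\emph{Row weights.} Fix an $X$ row $(i,r)$. For a column $(\lambda,t,h)$, the only one it can have in row group $i$ is at field position $t+a_i^{(\lambda)}h$. So the column has a one in row $(i,r)$ if and only if $t=r-a_i^{(\lambda)}h$. For each pair $(\lambda,h)\in\{0,1\}\times M$ there is therefore exactly one translation coordinate $t\in\F$ giving an incidence. The map $(\lambda,h)\mapsto(\lambda,\,r-a_i^{(\lambda)}h,\,h)$ is injective, since it preserves $\lambda$ and $h$, and its image is exactly the set of columns with a one in row $(i,r)$. The row weight is thus $2|M|=L$. The same argument with $b_j^{(\lambda)}$ and row $(j,s)$ gives row weight $L$ in $\HZ$.

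There is no real obstacle. The only point needing care is that the weights are computed as genuine counts of incident positions, with no binary cancellation. That is guaranteed because each (row, column) pair is determined by at most one incidence condition: for columns, the row groups separate the $J$ rows; for rows, the recovered triple $(\lambda,t,h)$ is unique.

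As a consistency check, count the ones in $\HX$ both ways: $m_0L=J|\F|\cdot2|M|$ by rows and $2|\F||M|\cdot J$ by columns. These agree, confirming the $(J,L)$-regular dimensions stated in Definition~\ref{def:two-branch-coset-base}.
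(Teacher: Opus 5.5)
Your proposal is correct and follows essentially the same route as the paper: each column meets exactly one row in each of the $J$ row groups, giving column weight $J$, and for a fixed row each pair $(\lambda,h)\in\{0,1\}\times M$ determines the unique translation $t=r-a_i^{(\lambda)}h$, giving row weight $2|M|=L$. Your additional remarks on injectivity, the absence of binary cancellation, and the double-counting consistency check are sound but go slightly beyond what the paper's proof spells out.
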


\begin{proof}
By Definition~\ref{def:two-branch-coset-base}, each column is connected to exactly one row in each of the $J$ row groups on the $X$ side and exactly one row in each of the $J$ row groups on the $Z$ side.
Hence every column of $\HX$ and $\HZ$ has weight $J$.
For a fixed $X$ row $(i,r)$, each pair $(\lambda,h)$ determines a unique value
\[
  t=r-a_i^{(\lambda)}h,
\]
and hence a unique incident column.
Indeed, \(a_i^{(\lambda)}h\) is a fixed element of \(\F\), so subtraction in the field gives a single element \(t\in\F\).
Equivalently, if \(t\) and \(t'\) both satisfy \(r=t+a_i^{(\lambda)}h=t'+a_i^{(\lambda)}h\), then \(t=t'\).
Thus each $X$ row has weight $L=2|M|$.
The same argument applies to $Z$ rows.
This proves the claim.
\end{proof}

The remaining structural properties depend on the coefficient arrays.
The next two theorems separate the two finite-field certificates used in this paper.
The first theorem certifies CSS orthogonality from cross-type coset equalities.
The second theorem certifies the absence of same-type 4-cycles from same-type coset disjointness.
These hypotheses are sufficient for the construction in this paper; they are not claimed to be necessary for every possible two-branch base.

\begin{theorem}[CSS Orthogonality]\label{thm:orthogonality-coset-condition}
Let $\HX$ and $\HZ$ be the binary matrices of Definition~\ref{def:two-branch-coset-base}.
Assume that
\begin{equation}
\label{eq:cross-nonzero}
  b_j^{(\lambda)}-a_i^{(\lambda)}\ne0
\end{equation}
for all $\lambda\in\{0,1\}$ and all $i,j$.
Assume also that
\begin{equation}
\label{eq:cross-coset-equality}
  (b_j^{(0)}-a_i^{(0)})M=(b_j^{(1)}-a_i^{(1)})M
  \quad\text{for all } i,j.
\end{equation}
Then $\HX\HZ^{\mathsf T}=0$ over $\F_2$.
\end{theorem}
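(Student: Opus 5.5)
The plan is to compute each entry of \(\HX\HZ^{\mathsf T}\) directly as the number, modulo \(2\), of columns shared by an \(X\) row \((i,r)\) and a \(Z\) row \((j,s)\). Then I split this count by branch and show that the two branch counts are equal, so their sum is even.

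\emph{Step 1: count the overlap in a single branch.} Fix \(X\) row \((i,r)\), \(Z\) row \((j,s)\), and a branch \(\lambda\). By \eqref{eq:hx-incidence} and \eqref{eq:hz-incidence}, column \((\lambda,t,h)\) lies in both rows exactly when
\[
  t+a_i^{(\lambda)}h=r \quad\text{and}\quad t+b_j^{(\lambda)}h=s.
\]
Eliminating \(t\) gives \((b_j^{(\lambda)}-a_i^{(\lambda)})h=s-r\), and for each such \(h\) the value \(t=r-a_i^{(\lambda)}h\) is uniquely determined, as in the proof of Theorem~\ref{thm:two-branch-regularity}. Put \(c_\lambda=b_j^{(\lambda)}-a_i^{(\lambda)}\), which is nonzero by \eqref{eq:cross-nonzero}. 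The number of branch-\(\lambda\) columns common to the two rows is therefore
\[
  N_\lambda=\#\{h\in M : c_\lambda h=s-r\}.
\]
Since \(h\mapsto c_\lambda h\) is injective, \(N_\lambda\in\{0,1\}\), and \(N_\lambda=1\) if and only if \(s-r\in c_\lambda M\).

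\emph{Step 2: compare the branches.} By \eqref{eq:cross-coset-equality}, \(c_0M=c_1M\). Hence \(s-r\in c_0M\) if and only if \(s-r\in c_1M\), so \(N_0=N_1\). The case \(s=r\) needs no separate treatment: since \(c_\lambda\neq0\), \(0\notin c_\lambda M\), and so \(N_0=N_1=0\).

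\emph{Step 3: conclude.} The \(((i,r),(j,s))\) entry of \(\HX\HZ^{\mathsf T}\) equals \(N_0+N_1=2N_0\equiv0 \pmod 2\). As the rows were arbitrary, \(\HX\HZ^{\mathsf T}=0\) over \(\F_2\).

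The only real point is Step 1: the overlap within a branch must be \emph{exactly} the indicator of coset membership, not merely bounded by it. This relies on \(c_\lambda\neq0\), which makes the map \(h\mapsto c_\lambda h\) injective and keeps \(0\) out of the coset. Everything else is bookkeeping.
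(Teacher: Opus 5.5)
Your proposal is correct and follows the paper's proof essentially step for step: eliminate \(t\) so that a branch-\(\lambda\) overlap becomes the equation \((b_j^{(\lambda)}-a_i^{(\lambda)})h=s-r\), use the nonzero difference to make \(h\) unique, and use the coset equality to show that both branches contribute equally, so the inner product is even. Your explicit counts \(N_\lambda\in\{0,1\}\) with \(N_0=N_1\) just spell out the paper's statement that every overlap is ``absent in both branches or present once in each.''
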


\begin{proof}
By the incidence equations \eqref{eq:hx-incidence} and \eqref{eq:hz-incidence} in Definition~\ref{def:two-branch-coset-base}, an $X$ row $(i,r)$ and a $Z$ row $(j,s)$ share a branch \(\lambda\) column $(\lambda,t,h)$ exactly when
\[
  r=t+a_i^{(\lambda)}h,\qquad
  s=t+b_j^{(\lambda)}h.
\]
Eliminating \(t\) gives
\[
  s-r=(b_j^{(\lambda)}-a_i^{(\lambda)})h
\]
for some $h\in M$.
Conversely, if this equation holds for some \(h\in M\), then setting \(t=r-a_i^{(\lambda)}h\) gives a common column satisfying both \eqref{eq:hx-incidence} and \eqref{eq:hz-incidence}.
For this fixed \(h\), the value of \(t\) is unique by the same field calculation: any \(t'\) with \(r=t'+a_i^{(\lambda)}h\) must equal \(r-a_i^{(\lambda)}h=t\).
The nonzero-difference condition \eqref{eq:cross-nonzero} makes such an $h$ unique when it exists.
Hence the coset equality \eqref{eq:cross-coset-equality} is a sufficient condition making the branch \(0\) and branch \(1\) overlaps identical as subsets of possible differences $s-r$.
Every $X$-$Z$ row overlap is therefore either absent in both branches or present once in each branch, so the binary inner product is even.
\end{proof}

The next theorem applies the same coset viewpoint to row pairs of the same type.
Its role is to guarantee that two \(X\)-rows, or two \(Z\)-rows, never share two columns.

\begin{theorem}[Sufficient Coset Criterion for Same-Type 4-Cycle Exclusion]\label{thm:same-type-4cycle-coset-condition}
Let $\HX$ and $\HZ$ be the binary matrices of Definition~\ref{def:two-branch-coset-base}.
Assume that
\begin{equation}
\label{eq:same-type-nonzero}
  a_{i'}^{(\lambda)}-a_i^{(\lambda)}\ne0,\qquad
  b_{j'}^{(\lambda)}-b_j^{(\lambda)}\ne0
\end{equation}
for all $\lambda\in\{0,1\}$, all $i<i'$, and all $j<j'$.
If, in addition,
\begin{equation}
\label{eq:x-same-type-disjoint}
  (a_{i'}^{(0)}-a_i^{(0)})M
  \cap
  (a_{i'}^{(1)}-a_i^{(1)})M
  =\varnothing
  \quad\text{for all } i<i',
\end{equation}
and
\begin{equation}
\label{eq:z-same-type-disjoint}
  (b_{j'}^{(0)}-b_j^{(0)})M
  \cap
  (b_{j'}^{(1)}-b_j^{(1)})M
  =\varnothing
  \quad\text{for all } j<j'.
\end{equation}
Then the Tanner graphs of $\HX$ and $\HZ$ have no same-type 4-cycles.
\end{theorem}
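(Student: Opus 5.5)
The plan is to use the standard reformulation that a 4-cycle in a Tanner graph is the same thing as two distinct check rows sharing at least two columns. It then suffices to show that any two distinct \(X\)-rows of \(\HX\) share at most one column, and likewise for \(Z\)-rows of \(\HZ\). I will give the argument for \(\HX\) using \eqref{eq:same-type-nonzero} and \eqref{eq:x-same-type-disjoint}. The \(\HZ\) case is identical after replacing the \(a\)-coefficients by the \(b\)-coefficients and using \eqref{eq:z-same-type-disjoint}.

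Take two distinct \(X\)-rows \((i,r)\) and \((i',r')\). I split into two cases according to the row-group indices.

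\emph{Same row group, \(i=i'\).} Then \(r\ne r'\). By \eqref{eq:hx-incidence}, each column \((\lambda,t,h)\) meets row group \(i\) in exactly one row, namely \((i,t+a_i^{(\lambda)}h)\). Hence the two rows share no column at all. This is the same observation as in the proof of Theorem~\ref{thm:two-branch-regularity}.

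\emph{Different row groups.} Without loss of generality \(i<i'\). As in the proof of Theorem~\ref{thm:orthogonality-coset-condition}, a column \((\lambda,t,h)\) is incident to both rows exactly when
\[
  r=t+a_i^{(\lambda)}h,\qquad r'=t+a_{i'}^{(\lambda)}h .
\]
Eliminating \(t\) shows that this happens if and only if
\[
  r'-r=(a_{i'}^{(\lambda)}-a_i^{(\lambda)})h,
\]
and in that case \(t=r-a_i^{(\lambda)}h\) is determined.

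I first work within a single branch \(\lambda\). The difference \(a_{i'}^{(\lambda)}-a_i^{(\lambda)}\) is nonzero by \eqref{eq:same-type-nonzero}, so \(h\) is determined uniquely by \(r'-r\) whenever a solution exists, and then \(t\) is determined too. So each branch contributes at most one common column.

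Next I rule out a common column in each of the two branches simultaneously. Suppose branch \(0\) and branch \(1\) both contribute a common column. Then \(r'-r\) lies in \((a_{i'}^{(0)}-a_i^{(0)})M\) and also in \((a_{i'}^{(1)}-a_i^{(1)})M\). This contradicts the disjointness hypothesis \eqref{eq:x-same-type-disjoint}. Hence the two rows share at most one column in total.

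Combining the two cases, no two distinct \(X\)-rows share two columns, so the Tanner graph of \(\HX\) has no 4-cycles. The same argument applied to \(\HZ\) completes the proof.

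I do not expect a genuine obstacle. The only care needed is to state cleanly the equivalence between 4-cycles and pairs of rows overlapping in two columns. I also want to note that the case \(r'=r\) is harmless: there \((a_{i'}^{(\lambda)}-a_i^{(\lambda)})h\ne 0\) for every \(h\in M\), so no common column exists in either branch.
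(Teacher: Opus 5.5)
Your proposal is correct and follows the paper's proof essentially step for step: same-group rows share no column, the nonzero difference \(a_{i'}^{(\lambda)}-a_i^{(\lambda)}\) fixes \(h\) and then \(t\), so each branch contributes at most one common column, and the coset disjointness rules out one common column in each branch. The only difference is that you justify the same-group case from the one-row-per-group incidence structure, while the paper uses the eliminated equation \(r'-r=0\); this is immaterial.
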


\begin{proof}
By the \(X\)-side incidence equation \eqref{eq:hx-incidence}, two distinct $X$ rows $(i,r)$ and $(i',r')$ share a branch \(\lambda\) column $(\lambda,t,h)$ exactly when
\[
  r=t+a_i^{(\lambda)}h,\qquad
  r'=t+a_{i'}^{(\lambda)}h.
\]
Eliminating \(t\), this is equivalent to
\[
  r'-r=(a_{i'}^{(\lambda)}-a_i^{(\lambda)})h.
\]
If $i=i'$, then distinctness of the two rows gives $r\ne r'$, so the displayed equation has no solution.
Thus same-group row pairs share no column.
It remains to consider $i\ne i'$; after relabeling, assume $i<i'$.
The nonzero-difference condition \eqref{eq:same-type-nonzero} then makes $h$ unique when it exists, so a fixed branch contributes at most one common column.
Across the two branches, the disjointness condition \eqref{eq:x-same-type-disjoint} prevents this row pair from sharing one column in each branch.
Thus no $X$-side row pair shares two columns.
The $Z$-side argument is identical using the disjointness condition \eqref{eq:z-same-type-disjoint}.
\end{proof}

\subsection{Finite Search and Coefficient Examples}\label{subsec:coefficient-search}

The preceding two theorems reduce the coefficient choice to a finite quotient-coset feasibility problem.
The next proposition records this feasibility condition explicitly.
It is a necessary and sufficient condition for the coefficients to satisfy the two sufficient certificates used in this construction.

\begin{proposition}[Coefficient Feasibility for the Coset Certificates]\label{prop:coefficient-feasibility}
Let \(q=|\F|\), \(m=|M|\), and let
\[
  \pi:\F^\times\to \F^\times/M
\]
be the quotient map.
For the two-branch construction to have row weight \(L\) through this certificate, one must have \(L=2m\).
Thus \(L\) is even and \(m\mid(q-1)\).
The nonzero-difference assumptions also imply that, in each branch, the \(2J\) elements
\[
  a_0^{(\lambda)},\ldots,a_{J-1}^{(\lambda)},
  b_0^{(\lambda)},\ldots,b_{J-1}^{(\lambda)}
\]
are distinct, so \(q\ge 2J\) is necessary.
If \(J\ge2\), the same-type disjointness condition further requires at least two nonzero \(M\)-cosets, equivalently \((q-1)/m\ge2\).

For a fixed pair \((\F,M)\), a choice of coefficients satisfies the hypotheses of Theorems~\ref{thm:orthogonality-coset-condition} and~\ref{thm:same-type-4cycle-coset-condition} if and only if all differences appearing below are nonzero and
\[
  \pi(b_j^{(0)}-a_i^{(0)})
  =
  \pi(b_j^{(1)}-a_i^{(1)})
  \quad\text{for all }i,j,
\]
while
\[
  \pi(a_{i'}^{(0)}-a_i^{(0)})
  \ne
  \pi(a_{i'}^{(1)}-a_i^{(1)})
  \quad\text{for all }i<i',
\]
and
\[
  \pi(b_{j'}^{(0)}-b_j^{(0)})
  \ne
  \pi(b_{j'}^{(1)}-b_j^{(1)})
  \quad\text{for all }j<j'.
\]
\end{proposition}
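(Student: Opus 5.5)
The proof splits into two parts: the necessary conditions on \((q,m,L)\), and the equivalence between the coset hypotheses of Theorems~\ref{thm:orthogonality-coset-condition} and~\ref{thm:same-type-4cycle-coset-condition} and the quotient conditions. Both reduce to one elementary fact about cosets of a subgroup.

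\textbf{Basic fact.} For nonzero \(x,y\in\F^\times\), the cosets \(xM\) and \(yM\) are either equal or disjoint. They are equal exactly when \(\pi(x)=\pi(y)\), because cosets of a subgroup partition \(\F^\times\).

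\textbf{Necessary conditions.}
\begin{itemize}
\item The identity \(L=2m\) is Theorem~\ref{thm:two-branch-regularity}. Hence \(L\) is even, and \(m\mid(q-1)\) by Lagrange's theorem applied to \(M\le\F^\times\).
\item Fix a branch \(\lambda\). Condition \eqref{eq:same-type-nonzero} makes the \(a^{(\lambda)}_i\) pairwise distinct and the \(b^{(\lambda)}_j\) pairwise distinct. Condition \eqref{eq:cross-nonzero} gives \(a^{(\lambda)}_i\ne b^{(\lambda)}_j\) for all \(i,j\). So the \(2J\) listed elements are distinct, and \(q\ge 2J\).
\item Suppose \(J\ge2\). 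Then the pair \(i=0<i'=1\) exists, and \eqref{eq:x-same-type-disjoint} requires two disjoint nonzero cosets \((a^{(0)}_1-a^{(0)}_0)M\) and \((a^{(1)}_1-a^{(1)}_0)M\). These are two distinct elements of \(\F^\times/M\), so \((q-1)/m\ge2\).
\end{itemize}

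\textbf{Equivalence.} Once all the relevant differences are nonzero (this is exactly \eqref{eq:cross-nonzero} together with \eqref{eq:same-type-nonzero}), each coset condition translates into a quotient condition by the basic fact:
\begin{itemize}
\item \eqref{eq:cross-coset-equality} holds if and only if \(\pi(b_j^{(0)}-a_i^{(0)})=\pi(b_j^{(1)}-a_i^{(1)})\).
\item \eqref{eq:x-same-type-disjoint} holds if and only if \(\pi(a_{i'}^{(0)}-a_i^{(0)})\ne\pi(a_{i'}^{(1)}-a_i^{(1)})\).
\item \eqref{eq:z-same-type-disjoint} translates in the same way.
\end{itemize}
Each step is reversible, which gives both directions of the equivalence.

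\textbf{Main point to handle carefully.} The phrase ``all differences appearing below are nonzero'' must be read as covering both branches' cross differences and same-type differences. Only under that reading does it match \eqref{eq:cross-nonzero} and \eqref{eq:same-type-nonzero}. The basic fact also needs nonzero arguments, since \(0\cdot M=\{0\}\) is not a coset in \(\F^\times\) and \(\pi\) is defined only on \(\F^\times\). With this reading fixed, the rest is a direct translation, and I expect no real obstacle beyond this bookkeeping.
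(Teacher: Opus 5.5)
Your proposal is correct and follows essentially the same route as the paper. The necessary conditions come from Theorem~\ref{thm:two-branch-regularity}, Lagrange's theorem, and the nonzero-difference assumptions, and the equivalence is a direct translation using the fact that two nonzero elements generate the same \(M\)-coset exactly when they have the same image under \(\pi\); you are only slightly more explicit than the paper in stating that distinct cosets are disjoint and that \(\pi\) applies only to nonzero differences, both of which the paper uses implicitly.
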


\begin{proof}
The identity \(L=2m\) follows from Theorem~\ref{thm:two-branch-regularity}.
A subgroup \(M\le\F^\times\) of order \(m\) can exist only when \(m\mid(q-1)\).
The nonzero-difference conditions \eqref{eq:cross-nonzero} and \eqref{eq:same-type-nonzero} say, in particular, that no two elements among the \(a_i^{(\lambda)}\)'s and \(b_j^{(\lambda)}\)'s in the same branch coincide.
This gives \(q\ge2J\).
When \(J\ge2\), the same-type condition requires two nonzero differences to lie in disjoint nonzero cosets, so at least two nonzero \(M\)-cosets are necessary.

For the exact quotient condition, recall that two nonzero elements \(x,y\in\F^\times\) generate the same \(M\)-coset precisely when \(\pi(x)=\pi(y)\).
Therefore the cross-type coset equality \eqref{eq:cross-coset-equality} is exactly the displayed equality for \(\pi(b_j^{(\lambda)}-a_i^{(\lambda)})\).
Similarly, the same-type coset disjointness conditions \eqref{eq:x-same-type-disjoint} and \eqref{eq:z-same-type-disjoint} are exactly the displayed inequalities for the corresponding \(a\)- and \(b\)-differences.
Together with the stated nonzero-difference requirement, this is equivalent to satisfying the two coset certificates.
\end{proof}

Proposition~\ref{prop:coefficient-feasibility} is a finite feasibility test, not an automatic existence theorem for every \((J,L)\).
Nevertheless, the same quotient-coset certificate is not specific to the \((3,10)\) instance.
We used the following normalized exhaustive search.
Adding a field element to all \(a_i^{(\lambda)}\) and \(b_j^{(\lambda)}\) within one branch preserves all differences in that branch, and multiplying all coefficients in both branches by one nonzero field element preserves all quotient-coset equalities and disjointness tests.
Thus, without loss of generality for this search, we set
\[
  a_0^{(0)}=a_0^{(1)}=0,\qquad a_1^{(0)}=1.
\]
After \(\bm a^{(0)}\) and \(\bm b^{(0)}\) in branch \(0\), and \(\bm a^{(1)}\) in branch \(1\), are fixed, the orthogonality coset equations restrict each \(b_j^{(1)}\) to the finite intersection
\[
  \bigcap_{i=0}^{J-1}
  \left(a_i^{(1)}+(b_j^{(0)}-a_i^{(0)})M\right).
\]
We then test the remaining same-type disjointness conditions.
For fixed \((\F,M)\), this gives a deterministic search of order
\[
  O\!\left((q-2)_{2J-2}(q-1)_{J-1}J^2|M|\right),
\]
where \((u)_v=u(u-1)\cdots(u-v+1)\).
If a coefficient choice satisfying the quotient-coset certificates exists, this normalized exhaustive search finds one up to the stated translation and scaling symmetries.
This completeness statement is only for the sufficient certificates in Proposition~\ref{prop:coefficient-feasibility}; it does not rule out other CSS bases certified by different arguments.

The use of a finite field is a convenient sufficient framework rather than a logically minimal requirement of the incidence argument.
The regularity proof uses only additive translations, while the orthogonality and 4-cycle proofs use the fact that multiplication by each nonzero coefficient difference acts injectively on the subgroup coordinate \(h\in M\).
Equivalently, the relevant difference orbits must have trivial stabilizer under the action replacing multiplication by \(M\).
Thus the same proof strategy should extend to finite additive groups equipped with a semiregular automorphism group, or to finite rings and modules after restricting to coefficient differences for which the chosen unit action is injective.
In such a generalization, the quotient-coset tests in Proposition~\ref{prop:coefficient-feasibility} would be replaced by orbit-equality and orbit-disjointness tests.
We leave this group-action version of the construction for future work; the present paper stays with finite fields because they make the certificates and the normalized exhaustive search explicit and reproducible.

Table~\ref{tab:coefficient-feasibility-examples} lists coefficient examples found by this search over small finite fields.
For the \(\F_9\) row we write \(\F_9=\F_3[\alpha]/(\alpha^2+1)\); in the prime-field rows, the entries are integers modulo \(q\).
Since the multiplicative group \(\F^\times\) of a finite field is cyclic, the subgroup \(M\le\F^\times\) is uniquely determined by the field \(\F\) and its order \(|M|=L/2\).
For this reason, the table does not list the elements of \(M\).
The parameter column gives the actual base CSS parameters computed from binary ranks:
\(n=2q|M|\) and \(k=n-\operatorname{rank}_{\F_2}(\HX)-\operatorname{rank}_{\F_2}(\HZ)\).
It also gives the verified information on the base CSS distance \(d=\min(d_X,d_Z)\).
If the lower and upper bounds agree, the entry is written as \([[n,k,d]]\).
If a useful small upper-bound witness is available, the entry is written as \([[n,k,d_{\mathrm L}\le d\le d_{\mathrm U}]]\); if only the certified lower bound is reported, it is written as \([[n,k,d\ge d_{\mathrm L}]]\).
Lower bounds come from exhaustive exclusion of lower-weight nontrivial logical representatives.
Displayed upper bounds come from explicit logical representatives; large upper-bound witnesses are omitted from the table because they give little finite-length information.
The \(N_6\) column gives the directly enumerated numbers of same-type simple base 6-cycles \((N_6(X),N_6(Z))\).
This number is also relevant for the later lift design: each same-type base 6-cycle contributes one nonzero congruence constraint on the CPM exponents, so a larger \(N_6\) can make the lift-label search harder even when the base-level certificates hold.
The \(N_{XZ}^{(2)}\) column gives the number of \(X\)-row/\(Z\)-row pairs sharing two base columns.
For every row, direct enumeration also checks that each \(X\)-row/\(Z\)-row pair shares either zero or two base columns, as required later for the CPM-lift orthogonality condition.
The table certifies only the base-level properties covered by Theorems~\ref{thm:orthogonality-coset-condition} and~\ref{thm:same-type-4cycle-coset-condition}; it does not assert that the later lift constraints or decoding properties have been verified for those parameters.
The gray special row is the \((3,10)\) base over \(\F_{16}\) used for the 64-fold CPM lift; it is listed separately from the small-field row for the same \((J,L)\).

\begingroup
\tiny
\setlength{\tabcolsep}{3pt}
\begin{longtable}{c c c c c c >{\raggedright\arraybackslash}p{0.32\linewidth}}
\caption{Coefficient examples satisfying Proposition~\ref{prop:coefficient-feasibility} over small finite fields for selected \((J,L)\). The gray special row is the \(\F_{16}\) base used for the 64-fold lift.}
\label{tab:coefficient-feasibility-examples}\\
\toprule
\((J,L)\) & field & \(|M|\) & actual parameters & \(N_{XZ}^{(2)}\) & \(N_6\) & coefficient arrays \\
\midrule
\endfirsthead
\caption[]{Coefficient examples satisfying Proposition~\ref{prop:coefficient-feasibility} continued.}\\
\toprule
\((J,L)\) & field & \(|M|\) & actual parameters & \(N_{XZ}^{(2)}\) & \(N_6\) & coefficient arrays \\
\midrule
\endhead
\bottomrule
\endfoot
\((3,6)\)  & \(\F_7\)  & 3 & \([[42,10,3]]\) & 189 & \((168,168)\) &
\(\bm a^{(0)}=(0,1,3),\ \bm b^{(0)}=(2,4,5),\)
\(\bm a^{(1)}=(0,3,1),\ \bm b^{(1)}=(4,2,5)\) \\
\((3,8)\)  & \(\F_9\)  & 4 & \([[72,22,6]]\) & 324 & \((432,432)\) &
\(\bm a^{(0)}=(0,1,1+\alpha),\ \bm b^{(0)}=(2,1+2\alpha,2+\alpha),\)
\(\bm a^{(1)}=(0,1+\alpha,2),\ \bm b^{(1)}=(\alpha,2+\alpha,2+2\alpha)\) \\
\((3,10)\) & \(\F_{11}\) & 5 & \([[110,48,6]]\) & 495 & \((880,880)\) &
\(\bm a^{(0)}=(0,1,2),\ \bm b^{(0)}=(3,4,5),\)
\(\bm a^{(1)}=(0,2,1),\ \bm b^{(1)}=(4,3,5)\) \\
\specialrule{0.9pt}{2pt}{0pt}
\rowcolor{gray!12}
\((3,10)\) & \(\F_{16}\) & 5 & \([[160,76,4]]\) & 720 & \((800,800)\) &
\(\bm a^{(0)}=(0,1,2),\ \bm b^{(0)}=(7,3,6),\)
\(\bm a^{(1)}=(8,13,2),\ \bm b^{(1)}=(11,10,6)\) \\
\specialrule{0.9pt}{0pt}{2pt}
\((3,12)\) & \(\F_{13}\) & 6 & \([[156,82,3]]\) & 702 & \((1560,1560)\) &
\(\bm a^{(0)}=(11,6,5),\ \bm b^{(0)}=(12,1,9),\)
\(\bm a^{(1)}=(1,4,10),\ \bm b^{(1)}=(2,11,7)\) \\
\((3,14)\) & \(\F_{29}\) & 7 & \([[406,236,6]]\) & 1827 & \((2233,2436)\) &
\(\bm a^{(0)}=(10,25,22),\ \bm b^{(0)}=(24,4,14),\)
\(\bm a^{(1)}=(1,25,26),\ \bm b^{(1)}=(18,8,14)\) \\
\((3,16)\) & \(\F_{17}\) & 8 & \([[272,174,6]]\) & 1224 & \((3808,3808)\) &
\(\bm a^{(0)}=(0,1,2),\ \bm b^{(0)}=(3,4,5),\)
\(\bm a^{(1)}=(0,3,6),\ \bm b^{(1)}=(5,8,11)\) \\
\((3,18)\) & \(\F_{19}\) & 9 & \([[342,232,6]]\) & 1539 & \((5472,5472)\) &
\(\bm a^{(0)}=(0,1,2),\ \bm b^{(0)}=(3,4,5),\)
\(\bm a^{(1)}=(0,2,5),\ \bm b^{(1)}=(10,4,7)\) \\
\((3,20)\) & \(\F_{31}\) & 10 & \([[620,438,6]]\) & 2790 & \((7130,7750)\) &
\(\bm a^{(0)}=(0,1,2),\ \bm b^{(0)}=(3,4,5),\)
\(\bm a^{(1)}=(0,5,14),\ \bm b^{(1)}=(6,30,20)\) \\
\((3,30)\) & \(\F_{31}\) & 15 & \([[930,748,6]]\) & 4185 & \((26040,26040)\) &
\(\bm a^{(0)}=(0,1,2),\ \bm b^{(0)}=(3,4,5),\)
\(\bm a^{(1)}=(0,3,6),\ \bm b^{(1)}=(11,14,4)\) \\
\((4,8)\) & \(\F_{13}\) & 4 & \([[104,6,12]]\) & 832 & \((1456,1456)\) &
\(\bm a^{(0)}=(0,1,6,5),\ \bm b^{(0)}=(12,9,10,7),\)
\(\bm a^{(1)}=(0,10,12,2),\ \bm b^{(1)}=(8,9,3,4)\) \\
\((4,10)\) & \(\F_{11}\) & 5 & \([[110,28,10]]\) & 880 & \((3520,3520)\) &
\(\bm a^{(0)}=(9,8,1,4),\ \bm b^{(0)}=(7,3,0,6),\)
\(\bm a^{(1)}=(5,10,7,3),\ \bm b^{(1)}=(6,9,2,0)\) \\
\((4,12)\) & \(\F_{13}\) & 6 & \([[156,58,8]]\) & 1248 & \((6240,6240)\) &
\(\bm a^{(0)}=(6,4,2,8),\ \bm b^{(0)}=(10,11,3,0),\)
\(\bm a^{(1)}=(9,8,4,0),\ \bm b^{(1)}=(6,1,5,11)\) \\
\((4,14)\) & \(\F_{29}\) & 7 & \([[406,180,d\ge9]]\) & 3248 & \((8526,9744)\) &
\(\bm a^{(0)}=(0,1,17,13),\ \bm b^{(0)}=(25,5,20,10),\)
\(\bm a^{(1)}=(0,28,13,15),\ \bm b^{(1)}=(23,5,16,12)\) \\
\((4,16)\) & \(\F_{17}\) & 8 & \([[272,142,8]]\) & 2176 & \((15232,15232)\) &
\(\bm a^{(0)}=(0,1,14,10),\ \bm b^{(0)}=(11,13,5,9),\)
\(\bm a^{(1)}=(0,6,16,13),\ \bm b^{(1)}=(11,1,7,4)\) \\
\((4,18)\) & \(\F_{19}\) & 9 & \([[342,196,9\le d\le10]]\) & 2736 & \((21888,21888)\) &
\(\bm a^{(0)}=(0,1,18,17),\ \bm b^{(0)}=(9,12,10,11),\)
\(\bm a^{(1)}=(0,12,11,8),\ \bm b^{(1)}=(6,10,18,7)\) \\
\((4,20)\) & \(\F_{31}\) & 10 & \([[620,378,9\le d\le10]]\) & 4960 & \((29140,29450)\) &
\(\bm a^{(0)}=(0,1,29,27),\ \bm b^{(0)}=(10,8,25,7),\)
\(\bm a^{(1)}=(0,26,28,14),\ \bm b^{(1)}=(21,2,12,19)\) \\
\((4,22)\) & \(\F_{23}\) & 11 & \([[506,328,9\le d\le10]]\) & 4048 & \((40480,40480)\) &
\(\bm a^{(0)}=(0,1,9,22),\ \bm b^{(0)}=(16,7,12,21),\)
\(\bm a^{(1)}=(0,22,20,12),\ \bm b^{(1)}=(4,7,13,10)\) \\
\((4,24)\) & \(\F_{37}\) & 12 & \([[888,598,8]]\) & 7104 & \((51060,51948)\) &
\(\bm a^{(0)}=(0,1,26,17),\ \bm b^{(0)}=(3,12,35,14),\)
\(\bm a^{(1)}=(0,7,13,18),\ \bm b^{(1)}=(24,21,25,11)\) \\
\((4,26)\) & \(\F_{53}\) & 13 & \([[1378,960,d\ge7]]\) & 11024 & \((60632,68900)\) &
\(\bm a^{(0)}=(0,1,49,2),\ \bm b^{(0)}=(4,20,12,11),\)
\(\bm a^{(1)}=(0,11,21,51),\ \bm b^{(1)}=(43,31,22,4)\) \\
\((4,28)\) & \(\F_{29}\) & 14 & \([[812,586,8]]\) & 6496 & \((84448,84448)\) &
\(\bm a^{(0)}=(0,1,8,24),\ \bm b^{(0)}=(15,11,28,17),\)
\(\bm a^{(1)}=(0,26,23,12),\ \bm b^{(1)}=(18,8,28,19)\) \\
\((4,30)\) & \(\F_{31}\) & 15 & \([[930,688,d\ge7]]\) & 7440 & \((104160,104160)\) &
\(\bm a^{(0)}=(0,1,11,6),\ \bm b^{(0)}=(23,19,17,14),\)
\(\bm a^{(1)}=(0,17,8,10),\ \bm b^{(1)}=(23,2,21,20)\) \\
\((5,10)\) & \(\F_{11}\) & 5 & \([[110,8,12]]\) & 1375 & \((8800,8800)\) &
\(\bm a^{(0)}=(4,3,9,6,2),\ \bm b^{(0)}=(1,0,10,8,5),\)
\(\bm a^{(1)}=(8,6,10,2,7),\ \bm b^{(1)}=(4,5,3,9,1)\) \\
\end{longtable}
\endgroup

The \((3,30)\) row gives a \(J=3\) base whose design-rate lower bound is \(1-2J/L=1-6/30=0.8\).
It is the smallest such row within this two-branch certificate: the condition \(1-2J/L\ge0.8\) with \(J=3\) forces \(L\ge30\), hence \(|M|=L/2\ge15\), and the same-type disjointness certificate requires at least two nonzero \(M\)-cosets, so \(q-1\ge2|M|\ge30\).
Thus \(q\ge31\), and the displayed \(\F_{31}\) example attains the minimum possible field size and base length \(n=2q|M|=930\) under these certificate conditions.

\subsection{Zero Congruence Constraints for Orthogonality}\label{subsec:cpm-lift-orthogonality}

This subsection states the condition that preserves CSS orthogonality when a base pair is replaced by a \(P\)-fold circulant permutation matrix (CPM) lift.
The condition is a zero congruence constraint on the CPM exponents.
The base-side hypothesis in Theorem~\ref{thm:cpm-lift-css-orthogonality} is not merely that each binary inner product is even.
It requires every \(X\)-row and \(Z\)-row to share either zero or two base columns.
If two columns are shared, the two CPM blocks can be made identical and hence cancel over \(\F_2\).
If four or more columns are shared, the binary base inner product is still zero, but a different condition is needed to cancel a sum of more than two CPM blocks.
The \(N_{XZ}^{(2)}\) column in Table~\ref{tab:coefficient-feasibility-examples} records the direct check of this hypothesis for each listed base.

\begin{theorem}[CSS Orthogonality after CPM Lifting]\label{thm:cpm-lift-css-orthogonality}
Let \((\HX,\HZ)\) be a binary base-matrix pair.
Assume that every \(X\)-row and \(Z\)-row share either zero or two base columns.
In a \(P\)-fold CPM lift, assign
\[
  s_X(r,c),\ s_Z(z,c)\in\mathbb Z/P\mathbb Z
\]
to every nonzero \(X\)-edge \((r,c)\) and \(Z\)-edge \((z,c)\), and replace the corresponding nonzero entries by \(\Pi^{s_X(r,c)}\) and \(\Pi^{s_Z(z,c)}\).
Here \(\Pi^s\) is the \(P\times P\) CPM whose nonzero entry in row \(u\) is in column \(u+s\).
If an \(X\)-row \(r\) and a \(Z\)-row \(z\) share exactly two base columns \(c_0,c_1\), assume that
\begin{equation}
\label{eq:cpm-lift-css-zero-constraint}
  s_X(r,c_0)-s_Z(z,c_0)
  \equiv
  s_X(r,c_1)-s_Z(z,c_1)
  \pmod P
\end{equation}
for every such pair.
Then the lifted matrices satisfy
\[
  \HXL(\HZL)^{\mathsf T}=0 .
\]
\end{theorem}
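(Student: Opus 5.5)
The plan is a block-by-block computation of the lifted product. Index the lifted rows and columns by pairs consisting of a base index and a coordinate in \(\mathbb Z/P\mathbb Z\). The \((r,z)\) block of \(\HXL(\HZL)^{\mathsf T}\), for an \(X\)-row \(r\) and a \(Z\)-row \(z\) of the base, is the \(P\times P\) matrix
\[
  B_{r,z}=\sum_{c}\Pi^{s_X(r,c)}\bigl(\Pi^{s_Z(z,c)}\bigr)^{\mathsf T},
\]
where the sum runs over the base columns \(c\) at which both \((r,c)\) is an \(X\)-edge and \((z,c)\) is a \(Z\)-edge. All other terms vanish, because a zero base entry is replaced by the \(P\times P\) zero block. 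It suffices to show that every \(B_{r,z}\) vanishes over \(\F_2\).

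The first step is to record two elementary CPM identities: \((\Pi^{s})^{\mathsf T}=\Pi^{-s}\) and \(\Pi^{s}\Pi^{s'}=\Pi^{s+s'}\). Both follow directly from the stated convention that row \(u\) of \(\Pi^s\) has its single one in column \(u+s\), with indices read modulo \(P\). Since \(\Pi\) is a permutation matrix, its transpose is its inverse, and \(\Pi^{P}=I\). Hence each summand above equals \(\Pi^{s_X(r,c)-s_Z(z,c)}\).

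The second step splits into cases using the base hypothesis that every \(X\)-row and \(Z\)-row share either zero or two base columns.
\begin{itemize}
\item If \(r\) and \(z\) share no column, the sum is empty and \(B_{r,z}=0\).
\item If they share exactly two columns \(c_0,c_1\), then
\[
  B_{r,z}=\Pi^{s_X(r,c_0)-s_Z(z,c_0)}+\Pi^{s_X(r,c_1)-s_Z(z,c_1)}.
\]
By the zero congruence constraint \eqref{eq:cpm-lift-css-zero-constraint}, the two exponents agree modulo \(P\), so the two blocks are the same matrix. Their sum is therefore \(2\Pi^{e}=0\) over \(\F_2\).
\end{itemize}
Since every block vanishes, \(\HXL(\HZL)^{\mathsf T}=0\).

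There is no real obstacle here. The only point needing care is the first step: fixing the convention for \(\Pi^s\) so that the exponent of each product term is correctly identified as \(s_X-s_Z\) rather than \(s_X+s_Z\), since the congruence in the statement is written in the difference form. I would also note that the hypothesis excluding four or more shared columns is exactly what keeps the case analysis to two cases, as discussed before the theorem.
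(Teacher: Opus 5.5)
Your proposal is correct and follows the same argument as the paper: decompose \(\HXL(\HZL)^{\mathsf T}\) into \((r,z)\) blocks, note that blocks with no shared base column vanish, and observe that the congruence makes the two CPM terms in each two-column block identical, so they cancel over \(\F_2\). Your explicit check that \(\Pi^{s}(\Pi^{s'})^{\mathsf T}=\Pi^{s-s'}\) supplies the step the paper uses without comment when it writes the block as a sum of \(\Pi^{s_X-s_Z}\) terms.
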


\begin{proof}
If \(X\)-row \(r\) and \(Z\)-row \(z\) share no base column, the corresponding lifted block inner product is zero.
If they share base columns \(c_0,c_1\), the lifted block inner product over \(\F_2\) is
\[
  \Pi^{s_X(r,c_0)-s_Z(z,c_0)}
  +
  \Pi^{s_X(r,c_1)-s_Z(z,c_1)} .
\]
Condition~\eqref{eq:cpm-lift-css-zero-constraint} makes the two CPMs identical, so their sum over \(\F_2\) is zero.
Thus every lifted \(X\)-row/\(Z\)-row block inner product is zero, and \(\HXL(\HZL)^{\mathsf T}=0\) follows.
\end{proof}

Equation~\eqref{eq:cpm-lift-css-zero-constraint} is a homogeneous linear congruence in the CPM exponents.
If all base-edge labels are collected into a vector \(\bm s\), the lift-orthogonality constraints have the form
\[
  A_0\bm s=\bm 0\pmod P .
\]

\subsection{Nonzero Congruence Constraints for 6-Cycle Exclusion}\label{subsec:cpm-lift-six-cycle-exclusion}

We next state the condition that prevents a same-type base 6-cycle from closing after the lift while preserving the zero congruence constraints for orthogonality.
For each base 6-cycle \(\gamma\), 6-cycle exclusion requires a linear form \(\ell_\gamma(\bm s)\) in the CPM exponents to satisfy
\[
  \ell_\gamma(\bm s)\ne0\pmod P .
\]
The method for finding labels that satisfy both zero and nonzero congruence constraints is described in Section~\ref{subsec:congruence-label-search}.

The next theorem is independent of the finite-field realization of the base.
Once the base has no same-type 4-cycles, a same-type lifted 6-cycle can appear from a base 6-cycle only when the corresponding signed CPM-exponent sum is zero.
Thus removing all lifted 6-cycles amounts to imposing one nonzero congruence constraint for each same-type base 6-cycle.
When \(N_6\) is large, these nonzero constraints may substantially restrict the remaining choices of lift labels, and satisfiability is a separate finite check rather than a consequence of the base construction alone.

\begin{theorem}[CPM Lift Condition for Avoiding Lifted 6-Cycles]\label{thm:cyclic-lift-six-cycle-criterion}
Fix one side \(S\in\{X,Z\}\) and let \(H_S\) be a binary base matrix whose Tanner graph has no 4-cycles.
For \(s\in\mathbb Z/P\mathbb Z\), let \(\Pi^s\) be the \(P\times P\) CPM whose nonzero entry in row \(u\) is in column \(u+s\), with indices read modulo \(P\).
Construct a \(P\)-fold CPM lift \(H_S^{\mathrm{lift}}\) of \(H_S\) by replacing each nonzero base entry, equivalently each base edge \((r,c)\), by \(\Pi^{s_S(r,c)}\), where
\[
  s_S(r,c)\in\mathbb Z/P\mathbb Z .
\]
Thus lifted check \((r,u)\) is incident to lifted variable \((c,u+s_S(r,c))\).
For a base 6-cycle
\[
  r_0-c_0-r_1-c_1-r_2-c_2-r_0,
\]
define its signed CPM-exponent sum by
\[
\begin{aligned}
  \Delta_S
  &=
  \bigl(s_S(r_0,c_0)-s_S(r_1,c_0)\bigr)
  +\bigl(s_S(r_1,c_1)-s_S(r_2,c_1)\bigr) \\
  &\qquad
  +\bigl(s_S(r_2,c_2)-s_S(r_0,c_2)\bigr)
  \pmod P .
\end{aligned}
\]
This base 6-cycle closes to a length-6 cycle in the lifted Tanner graph if and only if \(\Delta_S=0\).
Consequently, if \(\Delta_S\ne0\) for every same-type base 6-cycle on side \(S\), then the lifted side-\(S\) Tanner graph has girth at least \(8\).
\end{theorem}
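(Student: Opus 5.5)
The plan is to prove the 6-cycle criterion of Theorem~\ref{thm:cyclic-lift-six-cycle-criterion} by tracing lifted walks layer by layer, using only the lift rule that check \((r,u)\) is adjacent to variable \((c,u+s_S(r,c))\). I treat the two directions of the equivalence separately and then derive the girth consequence from the projection (covering) map. The first fact I record is the inverse move: from variable \((c,w)\) the unique lifted neighbour in base check row \(r\) is \((r,w-s_S(r,c))\). Consequently a lifted walk is determined by its starting layer \(u\) together with its base projection.

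For the forward direction, fix a base 6-cycle \(r_0-c_0-r_1-c_1-r_2-c_2-r_0\) and start at \((r_0,u)\). The walk goes to \((c_0,u+s_S(r_0,c_0))\), then to \((r_1,u+s_S(r_0,c_0)-s_S(r_1,c_0))\), and so on around the cycle. After six steps it returns to row \(r_0\) in layer \(u+\Delta_S\). Because the walk is forced, it closes exactly when \(\Delta_S\equiv0\pmod P\), and this condition does not depend on \(u\). When it closes, I will check that the six lifted vertices are distinct: they lie over six distinct base vertices, so the closed walk is a genuine 6-cycle. Conversely, if \(\Delta_S\ne0\), the lift of the base cycle starting at any layer ends at a different copy of \(r_0\), so no lifted 6-cycle projects onto this base cycle.

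For the girth consequence, I project an arbitrary lifted cycle of length \(\ell\in\{4,6\}\) to a closed walk of length \(\ell\) in the base Tanner graph. Two local properties of the projection do the work. It is a graph covering: the neighbours of any lifted vertex map bijectively onto the neighbours of its image, since each CPM block has exactly one 1 per row and per column. A cycle also never immediately backtracks. Together these imply that the projected closed walk has no immediate backtracking, including at the wrap-around point. A non-backtracking closed walk of length 4 in a bipartite graph must trace a 4-cycle, which the hypothesis on \(H_S\) excludes. A non-backtracking closed walk of length 6 must trace a 6-cycle; the alternative, a walk through a repeated vertex, would contain a shorter non-backtracking closed subwalk of length 2 or 4, and both are impossible. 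The length-6 lifted cycle therefore lies over a base 6-cycle, which by the first part forces \(\Delta_S=0\) for that cycle, contradicting the hypothesis.

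The step I expect to need the most care is this non-backtracking case analysis for length-6 closed walks, in particular ruling out degenerate walks that revisit a vertex. I also need to handle orientation and starting point: traversing the base cycle in reverse or from another vertex changes \(\Delta_S\) only by sign, so the condition \(\Delta_S\ne0\) is well defined per unoriented cycle. The forward computation itself is routine bookkeeping.
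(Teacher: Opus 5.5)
Your proposal is correct and follows the paper's proof: the forced lifted walk over $r_0-c_0-r_1-c_1-r_2-c_2-r_0$ returns to layer $u+\Delta_S$. The paper only asserts that short lifted cycles project to base cycles of the same length, and your covering/non-backtracking argument makes that step explicit. (Your repeated-vertex case for length $6$ cannot actually occur: in a bipartite closed walk of length $6$, any two same-side vertices sit at cyclic index distance $2$, so non-backtracking alone forces all six vertices to be distinct.)
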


\begin{proof}
Follow a lifted walk above the displayed base cycle, starting at lifted check \((r_0,u)\).
In the CPM block at \((r_0,c_0)\), the nonzero entry in row coordinate \(u\) is in column coordinate \(u+s_S(r_0,c_0)\).
Thus the walk reaches variable \((c_0,u+s_S(r_0,c_0))\).
To continue through the CPM block at \((r_1,c_0)\), the next lifted check coordinate must be
\[
  u+s_S(r_0,c_0)-s_S(r_1,c_0).
\]
Continuing the same calculation along the six CPM blocks changes the lifted check coordinate by exactly \(\Delta_S\).
Therefore the lifted walk returns to the starting lifted check coordinate if and only if \(\Delta_S=0\).
This is the standard cycle condition for cyclic graph lifts \cite{kelley2008ldpc}.

Because \(H_S\) has no base 4-cycles, replacing nonzero entries by CPM blocks cannot create lifted 4-cycles unless a base 4-cycle exists.
If all base 6-cycles have nonzero signed CPM-exponent sum, no 6-cycle closes in the lift.
Therefore the lifted side-\(S\) Tanner graph has no cycles of length \(4\) or \(6\), and its girth is at least \(8\).
\end{proof}

\subsection{Nonzero Congruence Constraints for Low-Weight Lift Supports}\label{subsec:lift-coordinate-coset-support-exclusion}

We first fix the terminology.
In a \(P\)-fold CPM lift, a base column \(c\) is replaced by the \(P\) lifted columns
\[
  (c,u),\qquad u\in\mathbb Z/P\mathbb Z .
\]
We call this the lifted column set associated with \(c\).

Let \(K\le\mathbb Z/P\mathbb Z\) be a subgroup.
Fix a base-column support \(T\), and choose a representative \(f_c\in\mathbb Z/P\mathbb Z\) for each \(c\in T\).
The lifted support
\[
  \widetilde T(\bm f,K)
  =
  \{(c,f_c+k):\ c\in T,\ k\in K\}
\]
is called the lift-coordinate coset lift support with base support \(T\), lift-coordinate subgroup \(K\), and representative vector \(\bm f=(f_c)_{c\in T}\).
It uses one coset \(f_c+K\) among the lift coordinates over each base column \(c\), rather than all \(P\) lifted columns associated with \(c\).
Its weight is therefore \(|T||K|\).
From the construction viewpoint, this definition separates two choices.
The base support \(T\) specifies which base columns are used, while \(\bm f\) specifies which lift coordinates are used after lifting.
If a base-stage support satisfies
\[
  \HX\vect{1}_T=\vect{0}
\]
and is not generated by \(Z\)-check rows, it can become a \(Z\)-type logical-operator candidate.
However, the base support is only the projection of a lifted support.
For a lift-coordinate coset support, the corresponding lifted equation is
\[
  \HXL\vect{1}_{\widetilde T(\bm f,K)}=\vect{0}.
\]
To become a genuine low-weight lifted support, this lifted equation must hold for some choice of representatives \(f_c\).

Thus removing lifted 6-cycles is not the only finite-length constraint needed after the base has been fixed.
One must also check specified small base-support families \(\mathcal O\) and small lift-coordinate subgroups \(K\).
In the \((3,10)\) instance below, the relevant base support has eight base columns and uses a two-point coset over each base column, giving a lifted candidate of weight \(16\).

The following theorem states only a necessary condition for this type of lift-coordinate coset support to have zero \(X\)-syndrome.
It does not characterize all logical operators.
How this necessary condition is used in lift design is explained after the theorem.

\begin{theorem}[Necessary Condition for a Lift-Coordinate Coset Support to Have Zero Syndrome]\label{thm:lift-coordinate-coset-support-exclusion}
Let \((\HX,\HZ)\) be a binary CSS base pair.
For \(s\in\mathbb Z/P\mathbb Z\), let \(\Pi^s\) be the \(P\times P\) CPM whose nonzero entry in row \(u\) is in column \(u+s\).
Let \(\HXL\) be the \(P\)-fold CPM lift of \(\HX\) obtained by replacing each nonzero \(X\)-edge \((r,c)\) by \(\Pi^{s_X(r,c)}\), where
\[
  s_X(r,c)\in\mathbb Z/P\mathbb Z .
\]
Let \(K\le \mathbb Z/P\mathbb Z\) be a subgroup.
Fix a base-column support \(T\), and consider lifted supports of the form
\[
  \widetilde T(\bm f,K)
  =
  \{(c,f_c+k):\ c\in T,\ k\in K\},
  \qquad
  f_c\in \mathbb Z/P\mathbb Z .
\]
If
\[
  \HXL\vect{1}_{\widetilde T(\bm f,K)}=\vect{0},
\]
then for each \(X\)-row \(r\) whose intersection with \(T\) contains exactly two columns \(a,b\),
\begin{equation}
\label{eq:lift-coordinate-coset-support-equation}
  f_b-f_a \equiv s_X(r,b)-s_X(r,a)
  \quad \text{in }(\mathbb Z/P\mathbb Z)/K
\end{equation}
is necessary.
\end{theorem}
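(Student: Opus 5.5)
The plan is to compute the lifted syndrome directly, restricted to the \(P\) lifted rows above one base row \(r\). The lifted check \((r,u)\) is incident to lifted variable \((c,u+s_X(r,c))\) for each base column \(c\) adjacent to \(r\). Equivalently, lifted variable \((c,v)\) meets check \((r,v-s_X(r,c))\). I fix an \(X\)-row \(r\) with \(r\cap T=\{a,b\}\). Then only the base columns \(a\) and \(b\) contribute to the syndrome entries indexed by \((r,u)\), \(u\in\mathbb Z/P\mathbb Z\); every other column of \(T\) is not adjacent to \(r\).

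Next I identify the checks over \(r\) that each piece of the support hits. The lifted variables \((a,f_a+k)\), \(k\in K\), hit exactly the checks \((r,u)\) with
\[
u\in f_a-s_X(r,a)+K .
\]
Each such check is hit exactly once, because \(\Pi^{s_X(r,a)}\) is a permutation matrix. Similarly, column \(b\) contributes the indicator of the coset \(f_b-s_X(r,b)+K\). The restriction of \(\HXL\vect{1}_{\widetilde T(\bm f,K)}\) to the rows over \(r\) is therefore, over \(\F_2\), the sum of the indicator vectors of these two \(K\)-cosets in \(\mathbb Z/P\mathbb Z\).

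Finally, I use the fact that two cosets of the same subgroup are either equal or disjoint. If they are disjoint, the sum of their indicators has weight \(2|K|>0\) (note \(|K|\ge1\)), so the syndrome is nonzero. Vanishing of the syndrome therefore forces
\[
f_a-s_X(r,a)+K=f_b-s_X(r,b)+K ,
\]
which rearranges to \eqref{eq:lift-coordinate-coset-support-equation} in the quotient \((\mathbb Z/P\mathbb Z)/K\).

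There is no real obstacle here. The only point needing care is the sign convention of the CPM incidence, which determines whether \(s_X\) enters with the sign stated. The argument also does not use rows meeting \(T\) in other numbers of columns, consistent with the theorem giving only a necessary condition.
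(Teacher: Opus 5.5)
Your proposal is correct and follows essentially the same argument as the paper: both pull the two support pieces back through the CPM incidence to the check cosets \(f_a-s_X(r,a)+K\) and \(f_b-s_X(r,b)+K\) over row \(r\), and require these cosets to be equal. Your explicit observation that distinct cosets are disjoint, so their indicator sum has weight \(2|K|\) and cannot vanish over \(\F_2\), supplies the justification that the paper's proof leaves implicit.
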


\begin{proof}
In the CPM block \(\Pi^{s_X(r,c)}\), lifted check coordinate \(u\) is adjacent to lifted variable coordinate
\[
  v=u+s_X(r,c).
\]
Equivalently, the lifted check coordinate reached from \((c,v)\) through edge \((r,c)\) is \(u=v-s_X(r,c)\).
If \(r\) meets \(T\) exactly in \(a,b\), then the lifted support gives the two lifted-check-coordinate cosets
\[
  f_a+K-s_X(r,a),
  \qquad
  f_b+K-s_X(r,b).
\]
For the lifted syndrome to vanish at this row, these two cosets must be equal.
This equality is equivalent to~\eqref{eq:lift-coordinate-coset-support-equation} in the quotient group \((\mathbb Z/P\mathbb Z)/K\).
Thus the displayed equation is necessary for the prescribed pattern to have a zero-syndrome lift.
\end{proof}

The next example explains how the finite consistency test in Theorem~\ref{thm:lift-coordinate-coset-support-exclusion} is used for the selected lift.
It records a concrete support family that is excluded; it is not a proof that every weight-16 logical operator is absent.

\begin{example}[Excluding a Two-Point Lift-Coordinate Coset Pattern in the 64-Fold Lift]\label{ex:lift-coordinate-coset-weight16-exclusion}
In the \((3,10)\) construction considered here, the candidate to be excluded has eight base columns and places two lift-coordinate values over each base column.
Take
\[
  P=64,\qquad K=\{0,32\}\le \mathbb Z/64\mathbb Z,
\]
and, for example,
\[
  T_0=\{10,25,55,60,99,104,134,149\}.
\]
This is not an isolated support.
Together with the paired seed
\[
  T_1=\{15,20,50,65,94,109,139,144\},
\]
translations in the finite field and scalings by the subgroup used in this construction generate 20 distinct base supports of this form after duplicates are removed.
For each support, the relevant \(X\)-row graph is connected.
Thus, if the quotient congruences below are consistent, the representatives \(f_c\bmod 32\) are determined up to a global translation.
Consequently, one closing base support would give 32 lifted supports of weight \(16\).
The lift-coordinate coset support on \(T_0\) has the form
\[
  \{(c,f_c),(c,f_c+32):\ c\in T_0\},
\]
so its weight is \(8\cdot2=16\).
Since \((\mathbb Z/64\mathbb Z)/K\) is naturally identified with \(\mathbb Z/32\mathbb Z\), each \(X\)-row meeting \(T_0\) in exactly two columns \(a,b\) gives the necessary congruence
\[
  f_b-f_a \equiv s_X(r,b)-s_X(r,a)\pmod {32}.
\]
Thus the exclusion check is a finite consistency test on the unknowns \(f_c\bmod 32\).
If the congruences around the rows touching \(T_0\) are inconsistent, then no choice of representatives \(f_c\) can make this two-point lift-coordinate coset pattern have zero \(X\)-syndrome.
The same test is applied to all 20 supports in the orbit.
For the selected 64-fold lift, all 20 systems are inconsistent.
This excludes the specified weight-16 \(Z\)-type lift-coordinate coset pattern, but it does not by itself prove that all weight-16 logical operators are absent.
\end{example}

The theorem converts a specified low-weight lift-support pattern into finite linear constraints on lift labels.
For a fixed base support \(T\), form the constraint graph obtained from~\eqref{eq:lift-coordinate-coset-support-equation}: its vertices are the columns in \(T\), and each \(X\)-row meeting \(T\) in exactly two columns gives an edge between them.
Adding the equations around a cycle \(\eta\) in this graph cancels the \(f_c\) terms.
Thus the necessary condition includes a zero congruence
\[
  \ell_{T,\eta}(\bm s)=0
  \quad\text{in }(\mathbb Z/P\mathbb Z)/K .
\]
To exclude the specified support, it is enough to choose labels for which at least one such cycle satisfies
\[
  \ell_{T,\eta}(\bm s)\ne0
  \quad\text{in }(\mathbb Z/P\mathbb Z)/K .
\]
In this sense, exclusion of a specified low-weight lift support can also be handled as a nonzero congruence constraint on the lift labels.

\subsection{Label Search Satisfying Zero and Nonzero Congruence Constraints}\label{subsec:congruence-label-search}

The previous subsections express the lift-label requirements as congruence constraints.
Orthogonality gives zero congruence constraints
\[
  A_0\bm s=\bm 0\pmod P,
\]
whereas 6-cycle exclusion and specified low-weight lift-support exclusion give nonzero congruence constraints of the form
\[
  \ell_j(\bm s)\ne0\pmod {m_j}.
\]
Here \(m_j=P\) for 6-cycle exclusion, while for lift-coordinate coset support exclusion, \(m_j=|(\mathbb Z/P\mathbb Z)/K|\) after identifying the quotient group with \(\mathbb Z/m_j\mathbb Z\).

The search is a finite congruence-satisfaction problem.
First, all CPM exponents on base edges are collected as unknowns, and the zero congruence system \(A_0\bm s=\bm 0\) from orthogonality is solved.
Keeping this solution set, the nonzero congruence constraints are tested one by one.
When a violated constraint \(\ell_j(\bm s)=0\) is found, a nonzero value \(u_j\in\mathbb Z/m_j\mathbb Z\) is selected and the linear congruence
\[
  \ell_j(\bm s)=u_j\pmod {m_j}
\]
is added.
If the enlarged linear system becomes inconsistent, another nonzero value is tried or the previous choice is backtracked.
This iteration avoids the zero sets of the nonzero constraints while always remaining inside the orthogonality solution set.

The implementation uses random nonzero values and restarts.
Since the candidate values for each \(u_j\) are finite, the procedure can be made complete by backtracking over all candidates.
Therefore, as long as only finitely many zero and nonzero congruence constraints are considered, the complete version finds a solution whenever one exists.
For the adopted labels, all zero and nonzero congruence constraints are directly rechecked after the search, independently of the search history.

\subsection{Testing Distance Upper-Bound Witnesses}\label{subsec:distance-upper-bound-witness}

Lift constraints exclude short cycles and specified low-weight support patterns.
By contrast, an upper bound on the minimum distance of the constructed finite-length code is obtained by finding an explicit logical representative and verifying that it is not a stabilizer.
The next theorem states this test for a general CSS check-matrix pair.
Here a witness means a non-stabilizer logical representative whose kernel condition and row-space nonmembership have both been directly verified.
It is evidence for an upper bound, not a determination of the exact distance.

\begin{theorem}[Testing Distance Upper-Bound Witnesses]\label{thm:distance-upper-bound-witness}
Let \((H_X,H_Z)\) be a binary CSS check-matrix pair.
If a column set \(S_X\) satisfies
\[
  \vect{1}_{S_X}\in\kerop H_Z,
  \qquad
  \vect{1}_{S_X}\notin \row(H_X),
\]
then the corresponding CSS code satisfies
\[
  d_X\le |S_X|.
\]
Similarly, if a column set \(S_Z\) satisfies
\[
  \vect{1}_{S_Z}\in\kerop H_X,
  \qquad
  \vect{1}_{S_Z}\notin \row(H_Z),
\]
then
\[
  d_Z\le |S_Z|.
\]
If both witnesses are available, then
\[
  d\le \min\{|S_X|,|S_Z|\}.
\]
\end{theorem}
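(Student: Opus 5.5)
The plan is to unwind the definitions of \(d_X\) and \(d_Z\) for a CSS code and show that each witness set is a feasible point of the corresponding minimization. First fix the conventions implicit in the statement. The pairing \(\vect{1}_{S_X}\in\kerop H_Z\) and \(\vect{1}_{S_X}\notin\row(H_X)\) dictates them: \(d_X\) is the minimum Hamming weight of a vector in \(\kerop H_Z\setminus\row(H_X)\), and \(d_Z\) is the minimum weight of a vector in \(\kerop H_X\setminus\row(H_Z)\). The overall distance is \(d=\min(d_X,d_Z)\), as in the discussion around Table~\ref{tab:coefficient-feasibility-examples}. Commutation \(H_XH_Z^{\mathsf T}=0\) gives \(\row(H_X)\subseteq\kerop H_Z\), so these sets are exactly the nontrivial logical representatives. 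This commutation is not needed for the inequality, since only membership in the set difference is used.

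For the \(X\) side, take \(\bm v=\vect{1}_{S_X}\). By hypothesis \(\bm v\in\kerop H_Z\) and \(\bm v\notin\row(H_X)\), so \(\bm v\) lies in the set over which the minimum defining \(d_X\) is taken. Its weight is \(\wt(\bm v)=|S_X|\), because the indicator vector has exactly one nonzero coordinate per element of \(S_X\). A minimum is at most the value at any feasible point, so \(d_X\le|S_X|\). The \(Z\) side is the same argument with the roles of \(H_X\) and \(H_Z\) exchanged, giving \(d_Z\le|S_Z|\).

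Finally, when both witnesses exist, \(d=\min(d_X,d_Z)\le\min\{|S_X|,|S_Z|\}\), by monotonicity of \(\min\) in each argument.

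No step is technically hard. The only point needing care is that the direction of the kernel and row-space conditions matches the paper's labeling of \(d_X\) versus \(d_Z\). I would state that convention explicitly at the start of the proof so that the two halves are not swapped.
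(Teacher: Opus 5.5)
Your proof is correct and follows the paper's argument. The witness $\vect{1}_{S_X}$ lies in $\kerop H_Z\setminus\row(H_X)$, so it is a nontrivial $X$-type logical representative of weight $|S_X|$ and bounds $d_X$; the $Z$ side is symmetric, and $d=\min(d_X,d_Z)$ gives the final bound. Your explicit convention for $d_X$ and $d_Z$ matches the one stated in Appendix~\ref{sec:distance-enumeration}.
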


\begin{proof}
The condition \(\vect{1}_{S_X}\in\kerop H_Z\) says that \(S_X\) is a candidate \(X\)-type logical representative.
The condition \(\vect{1}_{S_X}\notin \row(H_X)\) says that this representative is not an \(X\)-stabilizer.
Thus a nontrivial \(X\)-type logical representative of weight \(|S_X|\) exists, and \(d_X\le |S_X|\).
The same argument gives \(d_Z\le |S_Z|\) for a \(Z\)-type witness.
Since \(d=\min(d_X,d_Z)\), the final claim follows.
\end{proof}

This verification rule is the same logical-witness test used in \cite{kasai2026heuristic}.
It is separate from Theorem~\ref{thm:lift-coordinate-coset-support-exclusion}.
Theorem~\ref{thm:lift-coordinate-coset-support-exclusion} excludes specified low-weight support patterns, whereas Theorem~\ref{thm:distance-upper-bound-witness} gives an upper bound by directly checking explicit logical representatives on the constructed matrices.

\section{Concrete Example of Section~\ref{sec:general-theory}: the \texorpdfstring{$\F_{16}$}{F16} Base and 64-Fold Lift Constraints}\label{sec:f16-instance}\label{sec:lift-constraints}
We now specialize the general construction and lift constraints from Section~\ref{sec:general-theory} to the instance used in this paper.
We fix the \((3,10)\) base over \(\F_{16}\) shown as the gray special row in Table~\ref{tab:coefficient-feasibility-examples}, and then state the conditions imposed on the 64-fold CPM lift.
The base-stage coefficients, base parameters, and same-type base 6-cycle counts are recorded in the table.
The following subsections correspond to Sections~\ref{subsec:cpm-lift-orthogonality}--\ref{subsec:congruence-label-search}: zero congruence constraints for orthogonality, nonzero congruence constraints for 6-cycle exclusion, nonzero congruence constraints for low-weight lift-support exclusion, and the label search satisfying all of them.

The table states the selected \(\F_{16}\) coefficients and records that the base has 800 same-type simple 6-cycles on each side.
At the lift stage, the degree, commutation, and same-type 4-cycle conditions are already fixed at the base level.
The remaining finite-length checks are therefore to keep CSS orthogonality after CPM lifting, make each same-type base 6-cycle nonclosing, and exclude the specified weight-16 logical-support orbit.

\subsection{Concrete Zero Congruence Constraints for Orthogonality}
Corresponding to Section~\ref{subsec:cpm-lift-orthogonality}, we first specialize the zero congruence constraints that preserve CSS orthogonality after the 64-fold lift.
The base construction gives CSS orthogonality by making every \(X\)-row and \(Z\)-row share either zero or two base columns.
However, orthogonality after CPM lifting also depends on the labels.
If an \(X\)-row \(r\) and a \(Z\)-row \(z\) share base columns \(c_0,c_1\), the lifted block inner product is
\[
  \Pi^{s_X(r,c_0)-s_Z(z,c_0)}
  +
  \Pi^{s_X(r,c_1)-s_Z(z,c_1)} .
\]
This sum is zero over \(\F_2\) if and only if
\[
  s_X(r,c_0)-s_Z(z,c_0)
  \equiv
  s_X(r,c_1)-s_Z(z,c_1)
  \pmod {64}.
\]
Thus the lift-label search imposes no condition on \(X\)-row/\(Z\)-row pairs with no shared base column, and imposes the displayed congruence on all pairs sharing two base columns.
For the selected 64-fold labels, this congruence holds for all 720 paired \(X\)-\(Z\) overlaps, so \(\HXL(\HZL)^{\mathsf T}=0\) is preserved.

\subsection{Concrete Nonzero Congruence Constraints for 6-Cycle Exclusion}
After the base construction, the remaining short-cycle condition concerns the same-type 6-cycles.
The CPM lift is not used to repair commutation or row weights; those properties have already been settled at the base stage.
The general acceptance condition is Theorem~\ref{thm:cyclic-lift-six-cycle-criterion}: every same-type base 6-cycle must have nonzero signed CPM-exponent sum.
The base has 800 same-type simple 6-cycles on each of the $X$ and $Z$ sides.

This requirement is satisfied by the 64-fold CPM lift used below.
The liftability verification over both $\F_2$ and $\F_{997}$ found that all 800 same-type 6-cycles are avoidable and that no 6-cycle is forced to close.
For the selected 64-fold labels, no base 6-cycle has zero signed CPM-exponent sum, so no same-type 6-cycle closes in the lifted graph.

The base length is \(160\), so the 64-fold lift has length \(N=10240\) and \(3072\) rows on each side.
It remains \((3,10)\)-regular on both sides and satisfies the CSS condition.
For the selected labels, the number of lifted same-type 6-cycles arising from base 6-cycles is zero on both the \(X\) and \(Z\) sides.

\subsection{Concrete Low-Weight Lift-Support Exclusion}
Avoiding lifted 6-cycles alone does not prevent the minimum distance from becoming small.
Even among lifts in which no same-type base 6-cycle closes, many candidates can appear that become weight-16 \(Z\)-type logical operators after lifting.
They arise when a small support of eight base columns is placed on two lift-coordinate values over each base column.
Thus the next finite-length support pattern to exclude is the following weight-16 logical orbit.

This orbit is obtained at the base stage by looking for small supports in \(\ker\HX\) that are not in \(\row(\HZ)\), and then closing the resulting candidates under the symmetries of the \(\F_{16}\) instance.
The two seed supports are
\[
  T_0=\{10,25,55,60,99,104,134,149\}
\]
and
\[
  T_1=\{15,20,50,65,94,109,139,144\}.
\]
Both lie in $\ker\HX$ but not in $\row(\HZ)$.
If they close in the lift, they produce pure $Z$-type nondegenerate logical operators.
The forbidden lifted pattern places a two-point lift-coordinate coset \(\{f,f+32\}\) over each base column.
Since each seed support has eight base columns, this pair pattern has total weight \(16\).

These two supports generate the full family considered here.
By ``orbit'' we mean the 20 distinct supports obtained from them, after duplicate supports are removed, by translations in $\F_{16}$, scalings by $M$, and the two orientations.
All 20 lie in $\ker\HX$ and outside $\row(\HZ)$.
Thus closure of this orbit in the lift would produce a weight-16 $Z$-type nondegenerate logical operator.

The exclusion uses Theorem~\ref{thm:lift-coordinate-coset-support-exclusion} with \(P=64\) and \(K=\{0,32\}\).
For each support, if the congruence system in Theorem~\ref{thm:lift-coordinate-coset-support-exclusion} is inconsistent, then that support cannot occur as a two-point lift-coordinate coset pattern in the lift.
For the selected 64-fold lift, this test excludes all 20 supports in the orbit.
This is a sufficient certificate for the specified orbit pattern; it is not a characterization of all possible low-weight logical operators.

\subsection{Concrete Label Search and Accepted Instance}

For the selected 64-fold labels, we directly rechecked all zero congruence constraints for orthogonality, all nonzero congruence constraints for same-type base 6-cycles, and all nonzero congruence constraints for the 20 weight-16 base supports.
Thus CSS orthogonality, nonclosure of same-type lifted 6-cycles, and exclusion of the specified weight-16 lift-coordinate coset pattern hold simultaneously.

The $\F_2$ ranks are
\[
  \rank(\HXL)=3066,
  \qquad
  \rank(\HZL)=3066.
\]
Thus the effective 64-fold-lift code parameters are
\[
  [[10240,4108]],
\]
with rate
\[
  R=4108/10240=0.401171875.
\]
In frame error rate (FER) plots we also use the regular $(3,10)$ design rate $R=0.4$ as a reference; this is close to, but distinct from, the finite-length effective rank.

The finite-length matrices allow a stronger direct certificate than the girth bound.
The complete support enumeration in Appendix~\ref{sec:distance-enumeration} excludes nontrivial logical representatives of weight at most $16$ on both sides.
The enumeration uses the actual FER-experiment matrices, whose coefficients are archived with this version.
This is separate from the weight-16 orbit exclusion: all supports through weight $16$ are screened, rather than one specified family.
Every kernel vector has even weight because each column has odd weight three.

\begin{proposition}[Certified distance lower bound]
For the 64-fold lift constructed here,
\[
  d_X \ge 18,
  \qquad
  d_Z \ge 18,
  \qquad
  d \ge 18.
\]
\end{proposition}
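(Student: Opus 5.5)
The bound will be certified by a finite computation on the actual lifted matrices, namely the complete support enumeration of Appendix~\ref{sec:distance-enumeration}. No structural argument is needed. It suffices to show that every nonzero vector \(v\) of weight at most \(16\) in \(\kerop \HXL\) lies in \(\row(\HZL)\), and symmetrically every such vector in \(\kerop \HZL\) lies in \(\row(\HXL)\). Then no nontrivial logical representative has weight at most \(16\). By the parity remark preceding the statement, every kernel vector has even weight, since each lifted column has odd weight three and \(\vect 1^{\mathsf T}H=3\cdot\vect 1^{\mathsf T}\) summed over the row groups gives the parity. Weight \(17\) is therefore impossible, and \(d_X,d_Z\ge18\) follows, hence \(d=\min(d_X,d_Z)\ge18\).

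The enumeration would be organized as a connected-component search on the Tanner graph. Any minimal nontrivial logical representative of minimum weight can be taken to have a support whose induced \(X\)-row-adjacency graph is connected. Otherwise one component is already in the kernel; either it is itself nontrivial, or the remainder is nontrivial and lighter. I would therefore enumerate connected kernel supports of weight \(\le16\) by a depth-first growth from a seed column. Each unsatisfied check forces the addition of one of its \(9\) other neighbours, and the search is pruned once the weight would exceed \(16\) or the number of unsatisfied checks exceeds what the remaining budget can cancel. Each remaining check can absorb at most three per added column, which gives a lower bound on the additional weight needed.

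The quasi-cyclic symmetry reduces seeds. The lift is invariant under simultaneous cyclic shift of all lift coordinates, so it suffices to seed at lift coordinate \(0\) of each of the \(160\) base columns. For every kernel support found, I would test membership in \(\row(\HZL)\) by Gaussian elimination against a precomputed reduced basis of rank \(3066\). The same procedure is then run with the roles of \(X\) and \(Z\) exchanged.

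The main obstacle is the size of the branching search at weight \(16\) with check degree \(10\). The girth-eight property from Theorem~\ref{thm:cyclic-lift-six-cycle-criterion} helps, since local neighbourhoods are tree-like, which bounds how quickly unsatisfied checks can be cancelled and strengthens the pruning. The exclusion in Example~\ref{ex:lift-coordinate-coset-weight16-exclusion} serves as a consistency check that the known weight-\(16\) orbit does not appear. If the plain search is too slow, I would first try a meet-in-the-middle split: enumerate partial supports of weight \(\le8\) indexed by their syndromes, and match pairs with equal syndrome.
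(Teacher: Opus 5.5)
Your argument has the same logical skeleton as the paper's proof, and it is correct. Both exclude all nontrivial logical representatives of weight at most $16$ on both sides and then use the even-weight property from odd column weight three. Both also reduce to $160$ root columns via the simultaneous cyclic shift and use exact row-space membership tests.

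The difference is in how the enumeration is organized. You grow supports directly in the lifted Tanner graph, branching over the at most nine remaining neighbours of an unsatisfied check. This is complete because a minimum-weight nontrivial support $S$ contains no proper nonzero kernel subset. Hence every partial support $P\subsetneq S$ on the search path has an unsatisfied check, and each such check meets $S\setminus P$ an odd number of times. You should state this explicitly, since it is what guarantees the growth never stalls before reaching $S$.

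The appendix instead uses the fact that every column meets exactly one check in each of the three row groups. Pairing the support columns at each check gives three colored perfect matchings, whose union is a simple triangle-free cubic graph because the Tanner graphs have no $4$- or $6$-cycles. These abstract patterns are enumerated once up to color-preserving isomorphism ($25375$ at weight $16$) and then embedded. Once two neighbours of different colors are placed, the no-$4$-cycle property leaves at most one candidate column. The appendix also prunes with two stronger facts: a minimal support is connected under every pairing, and it cannot strictly contain a nonzero stabilizer support.

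Your route is easier to state and audit, but its search tree is far larger. A typical added column clears one unsatisfied check and creates two, so your budget bound $u\le 3(16-w)$ only starts to cut near $w\approx 12$. The nine-way branching therefore runs essentially unpruned for about ten levels per root. The pattern method instead moves the combinatorics into a small catalogue and makes embedding nearly deterministic.

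Your meet-in-the-middle fallback, as stated, would need a structural restriction on the two halves to be feasible. Once restricted, its completeness would need a separate argument.
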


\begin{proof}
Every $X$-type logical representative is a nonzero vector in $\kerop \HZ$, and every $Z$-type logical representative is a nonzero vector in $\kerop \HX$.
Complete enumeration excludes nontrivial logical representatives of weight at most $16$ in both quotients. Odd-weight kernel vectors are impossible, so every nontrivial logical representative has weight at least $18$.
Since \(d=\min(d_X,d_Z)\), the stated CSS distance lower bound follows.
\end{proof}

The same idea gives a precise way to turn the construction into a larger certified-distance construction.
The key point is that a lower bound on the CSS distance is not obtained from decoder screening alone; it must be certified by excluding all nontrivial logical representatives below the target weight.
The following acceptance criterion is the finite-length condition that has to be added to a lift search when a larger lower bound is required.

\begin{proposition}[Target-distance acceptance criterion]\label{prop:target-distance-acceptance}
Fix a target integer \(D\).
For a lifted CSS pair \((\HX,\HZ)\), suppose that exact finite enumeration verifies
\[
  \{\,\vect{x}\in\kerop \HZ:\ 1\le \wt(\vect{x})<D\,\}
  \subseteq \row(\HX)
\]
and
\[
  \{\,\vect{z}\in\kerop \HX:\ 1\le \wt(\vect{z})<D\,\}
  \subseteq \row(\HZ).
\]
Then the corresponding CSS code satisfies
\[
  d_X\ge D,\qquad d_Z\ge D,\qquad d\ge D.
\]
\end{proposition}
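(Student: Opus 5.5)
The argument is a direct unwinding of the CSS distance definitions, in the same style as the proof of the certified distance lower bound just above. Recall that $d_X$ is the minimum weight of a vector in $\kerop \HZ\setminus\row(\HX)$ and $d_Z$ is the minimum weight of a vector in $\kerop \HX\setminus\row(\HZ)$. Both sets are nonempty when $k>0$; if one is empty, the corresponding distance is conventionally infinite and the bound holds vacuously. With this in hand, the proposal proves $d_X\ge D$ by contraposition and then argues $d_Z\ge D$ symmetrically.

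First I take a nontrivial $X$-type logical representative $\vect{x}$, that is, a vector with $\vect{x}\in\kerop\HZ$ and $\vect{x}\notin\row(\HX)$. Since $\vect{0}\in\row(\HX)$, such an $\vect{x}$ is nonzero, so $\wt(\vect{x})\ge1$. Next, if $\wt(\vect{x})<D$, then $\vect{x}$ belongs to the first enumerated set in the hypothesis, so $\vect{x}\in\row(\HX)$, which contradicts the choice of $\vect{x}$. Hence every nontrivial $X$-type representative has weight at least $D$, and taking the minimum gives $d_X\ge D$.

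The identical argument with the roles of $\HX$ and $\HZ$ exchanged, using the second inclusion, gives $d_Z\ge D$. Finally, $d=\min(d_X,d_Z)\ge D$.

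There is no real mathematical obstacle here. The only care point is to state the definition of $d_X$ and $d_Z$ as minima over $\kerop\setminus\row$, matching the convention used in Theorem~\ref{thm:distance-upper-bound-witness}, so that the contrapositive step is exactly the hypothesis. The one computational caveat is that the hypothesis really is an \emph{exact} enumeration, meaning it covers every kernel vector of weight below $D$. That exactness belongs to the verification procedure in the appendix, not to this proposition, so I will note it only as a remark.
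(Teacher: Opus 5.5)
Your proof is correct and follows essentially the same route as the paper's: both unwind \(d_X\) and \(d_Z\) as minima over \(\kerop\setminus\row\) and observe that each inclusion rules out nontrivial representatives of weight below \(D\), then take \(d=\min(d_X,d_Z)\). Your explicit remarks that nontrivial representatives are nonzero and that an empty set gives an infinite distance are harmless extra care beyond what the paper writes.
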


\begin{proof}
The first inclusion says that every $X$-type kernel vector of weight below \(D\) is an $X$ stabilizer, so no nontrivial $X$ logical representative has weight below \(D\).
Hence \(d_X\ge D\).
The second inclusion gives the same statement for $Z$-type representatives, hence \(d_Z\ge D\).
Taking the minimum gives \(d\ge D\).
\end{proof}

This proposition is a constructive acceptance rule: in a search for lift labels, one can reject a candidate lift unless it passes the two exact inclusions for the chosen target \(D\).
The present 64-fold lift realizes Proposition~\ref{prop:target-distance-acceptance} with $D=18$, by complete support enumeration and the even-weight property.
Larger certified lower bounds require completing the same exact finite enumeration for the larger target \(D\); decoder failures or defect-focused screening by themselves are not used as distance lower-bound certificates.

We also record an explicit distance upper-bound certificate for the same finite-length code.
This step has a different purpose from the orbit exclusion above.
It does not identify the exact distance; it only records explicit logical representatives whose status has been verified exactly.
Following the acceptance criterion in \cite{kasai2026heuristic}, a candidate is used only after two binary checks: membership in the kernel of the opposite check matrix and exclusion from the corresponding stabilizer row space.
In this paragraph, a witness means such a verified non-stabilizer logical representative; it is evidence for an upper bound, not for an exact distance value.
The lift-coordinate coset exclusion condition in Theorem~\ref{thm:lift-coordinate-coset-support-exclusion} is used to rule out a specified low-weight support family.
By contrast, the following upper-bound witnesses are not consequences of that exclusion criterion; they are verified directly on the constructed lifted matrices by checking kernel membership and nonmembership in the relevant stabilizer row spaces.

\begin{proposition}[Certified distance upper bound]\label{prop:verified-distance-upper-bound}
For the 64-fold lift constructed here,
\[
  d_X \le 32,
  \qquad
  d_Z \le 32,
  \qquad
  d \le 32.
\]
\end{proposition}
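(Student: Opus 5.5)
The plan is to apply Theorem~\ref{thm:distance-upper-bound-witness} to the lifted pair \((\HXL,\HZL)\). It suffices to exhibit one weight-32 column set \(S_X\) with \(\vect{1}_{S_X}\in\kerop\HZL\) and \(\vect{1}_{S_X}\notin\row(\HXL)\), together with one weight-32 set \(S_Z\) satisfying the mirrored conditions. The theorem then gives \(d_X\le32\), \(d_Z\le32\), and \(d\le\min(d_X,d_Z)\le32\).

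To produce candidates I would reuse the lift-coordinate coset structure of Theorem~\ref{thm:lift-coordinate-coset-support-exclusion}, this time looking for a pattern that \emph{does} close. Take an orbit support \(T\) of eight base columns with \(\HX\vect{1}_T=\vect{0}\), for example \(T_0\). Over each base column place the full \(K\)-coset with \(K=\{0,16,32,48\}\), which has order \(4\). The quotient congruences now live in \(\mathbb Z/16\mathbb Z\), so they are coarser than the mod-\(32\) system that was made inconsistent. Each support gives \(8\cdot4=32\) lifted columns, and solving the system for \(f_c \bmod 16\) yields a zero-syndrome candidate. An alternative source, still inside Theorem~\ref{thm:lift-coordinate-coset-support-exclusion} with \(K=\{0,32\}\), is a sixteen-column base support in \(\kerop\HX\).

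For the \(X\) side I would use the same procedure with \(\HZ\), \(\HZL\), and the analogous base supports. Another route is to exploit the symmetry between the \(a\)- and \(b\)-coefficients of the \(\F_{16}\) base. If all of these fail, I would fall back on a heuristic low-weight search such as that of \cite{kasai2026heuristic}, run on the actual lifted matrices.

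Once candidates are in hand, verification is pure linear algebra over \(\F_2\). Kernel membership is a direct syndrome computation. Nonmembership in the stabilizer row space I would certify by finding a logical operator of the opposite type, \(\vect{1}_{S_Z'}\in\kerop\HXL\), with odd overlap with \(\vect{1}_{S_X}\). Such an overlap proves \(\vect{1}_{S_X}\notin\row(\HXL)\), because every row of \(\HXL\) is orthogonal to \(\kerop\HXL\). An equivalent check is to compare \(\rank\HXL=3066\) with the rank of \(\HXL\) after appending the candidate as an extra row.

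The main obstacle is the first step: actually finding a weight-32 nonstabilizer kernel vector. The kernel check is mechanical, but a closing pattern must exist and must not be degenerate. I would first try the \(|K|=4\) coset lifts of the twenty orbit supports, since their consistency is a small finite test and nondegeneracy is plausible because the base supports lie outside \(\row(\HZ)\).
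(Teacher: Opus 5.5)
Your reduction is the same as the paper's. You apply Theorem~\ref{thm:distance-upper-bound-witness} to the lifted pair, using supports of exactly the shape you anticipate: eight base columns, each carrying a full coset of $K=16\mathbb Z/64\mathbb Z$, for weight $8\cdot4=32$. You then check kernel membership and row-space nonmembership by direct computation over $\F_2$. The gap is that you never produce the two supports, and for this proposition the supports \emph{are} the proof. As written, you only show that $d_X,d_Z\le 32$ \emph{if} a closing, nondegenerate pattern exists, and you flag that yourself as the main obstacle. The paper supplies the supports explicitly. $\widetilde S_X$ is built from $A_X=\{(2,10),(12,2),(47,13),(57,14),(108,14),(118,9),(143,14),(153,14)\}$, and $\widetilde S_Z$ from $A_Z=\{(28,3),(33,0),(68,6),(73,14),(87,1),(92,7),(127,0),(132,12)\}$, with each pair $(c,f_c)$ expanded by $K$. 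Both are checked directly on the 64-fold matrices. Once these are in hand, your verification step and the witness theorem finish the argument.

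The reasons you give for expecting the search to succeed should not be relied on. Coarsening from $\mathbb Z/32\mathbb Z$ to $\mathbb Z/16\mathbb Z$ makes consistency possible, not automatic. The system for $T$ is consistent only if every cycle sum $\ell_{T,\eta}(\bm s)$ vanishes modulo $16$, whereas the selected lift only guarantees that some cycle sum is nonzero modulo $32$. Also, Theorem~\ref{thm:lift-coordinate-coset-support-exclusion} gives only a necessary condition. A consistent solution therefore yields zero syndrome only once you check that every row meeting $T$ meets it in exactly two columns, or simply compute the syndrome.

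More importantly, $T\notin\row(\HZ)$ is no evidence that the lifted vector is nondegenerate. The natural way to transport base nonmembership to the lift is the column-sum projection $x\mapsto(\sum_u x_{(c,u)})_c$, which maps $\row(\HZL)$ into $\row(\HZ)$. But it sends $\vect{1}_{\widetilde T(\bm f,K)}$ to $|K|\,\vect{1}_T=\vect{0}$, because $|K|$ is even. Nondegeneracy must be certified on the lifted matrices themselves, which your rank test or odd-overlap test does correctly. Note also that the paper's $Z$-witness sits on base columns $\{28,33,68,73,87,92,127,132\}$, not on the seeds $T_0,T_1$.
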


\begin{proof}
For lifted columns, write \((c,f)\) for the column \(64c+f\), where \(c\) is a base column and \(f\in\mathbb Z/64\mathbb Z\).
Let \(K=16\mathbb Z/64\mathbb Z=\{0,16,32,48\}\).
For the \(X\)-type representative, take
\[
  \widetilde S_X
  =
  \{(c,r+k): (c,r)\in A_X,\ k\in K\},
\]
where
\[
  A_X=\{(2,10),(12,2),(47,13),(57,14),(108,14),(118,9),(143,14),(153,14)\}.
\]
This support has weight \(8\cdot4=32\).
Direct binary verification on the 64-fold matrices gives
\[
  \vect{1}_{\widetilde S_X}\in\kerop \HZ,
  \qquad
  \vect{1}_{\widetilde S_X}\notin \row(\HX).
\]
Thus \(d_X\le32\).

For the \(Z\)-type representative, take
\[
  \widetilde S_Z
  =
  \{(c,r+k): (c,r)\in A_Z,\ k\in K\},
\]
where
\[
  A_Z=\{(28,3),(33,0),(68,6),(73,14),(87,1),(92,7),(127,0),(132,12)\}.
\]
This support has weight \(8\cdot4=32\), and direct binary verification gives
\[
  \vect{1}_{\widetilde S_Z}\in\kerop \HX,
  \qquad
  \vect{1}_{\widetilde S_Z}\notin \row(\HZ).
\]
Thus \(d_Z\le32\).
Since the CSS distance is \(d=\min(d_X,d_Z)\), \(d\le32\).
\end{proof}

Combining the lower and upper bounds, the present certificates give
\[
  18\le d_X\le32,
  \qquad
  18\le d_Z\le32,
  \qquad
  18\le d\le32.
\]
They do not determine the exact lifted distance.

\subsection{Correction of the Published Lift Data}\label{subsec:lift-data-correction}
This version distinguishes two different $P=64$ lifts of the same base.
The FER experiments use the lift stored as \texttt{P64\_avoid\_w16\_orbit}, selected with sampler seed $20260513$ at sample $434$.
Its coefficients are supplied in \texttt{anc/fer\_distance/data/fer.csv}; its full check matrices are supplied in the same directory.
All distance bounds and the FER curve for the detailed example in this version refer to this lift.

The external coefficient file \texttt{p64\_lift\_coefficients.csv} linked by the preceding version instead encoded the earlier \texttt{P64} lift.
Its SHA-256 is
\begin{center}\footnotesize\ttfamily
137f98b600969582691a6c1912ee982e0f54\\
1c39ecdc3950cac6a0448493ebae
\end{center}
The old and selected lifts have the same length, ranks, and degree distribution, but different edge shifts.
The previous version's displayed weight-$32$ representatives were valid for the old coefficient file; they had nonzero syndromes on the FER-experiment matrices.
The representatives in Proposition~\ref{prop:verified-distance-upper-bound} have therefore been replaced by ones verified on the selected lift.

For completeness, the old coefficient file has exact CSS distance $16$.
A nontrivial $Z$-type representative is obtained from the eight base columns
\[
T_0=\{10,25,55,60,99,104,134,149\}
\]
with respective residues $(1,8,27,30,27,5,31,0)$ modulo $32$, placing both $f_c$ and $f_c+32$ over each base column.
It has zero $X$-syndrome and increases the binary rank of $H_Z$ from $3066$ to $3067$ when appended.
Complete enumeration gives $d_X\ge18$ and $d_Z\ge16$ for that old lift, so $d=d_Z=16$ there.
This support lies in the orbit excluded by the selected lift.
Consequently, the old file's distance-$16$ certificate does not apply to the FER-experiment code.
Both instances and their separate verification records are archived as ancillary material.

\section{Decoder: Joint Log-Domain BP and Low-Complexity Post-Processing}
\label{sec:decoder}
This section specifies the finite-length decoding procedure used for the constructed 64-fold lift.
It records the joint log-domain belief-propagation setting, the standard decoder parameters, and the post-processing rules used when a small residual syndrome remains.
The point of the section is to make the subsequent FER measurements reproducible as decoder measurements, separate from the construction certificates.

The construction certificates address graph structure and one known low-weight logical-support pattern.
The decoding experiment is a separate finite-length question: how this verified lifted code behaves under joint BP and deterministic post-processing.
The BP decoder used here is joint BP in the binary factor-graph formulation of CSS syndrome decoding \cite{kasai2026factorgraph}.
The choice to record small residual syndromes and post-processing outcomes follows the finite-length viewpoint that low-probability failures can be dominated by small trapping or logical structures rather than by the average ensemble behavior alone \cite{kasai2025errorfloor,komoto2025sharp,kasai2026heuristic}.
The implementation is the log-domain version: messages are stored and updated as log-likelihood ratios, which avoids probability underflow and gives numerically stable check-node and variable-node updates.
For the depolarizing channel, joint BP keeps the local correlation between the two Pauli error components through the joint prior at each qubit.
The purpose of the post-processing is to decide whether a residual syndrome has a small-neighborhood explanation visible from the joint-BP output and the lifted Tanner graph.
The decoder uses
\[
  \text{row degree}=10,\qquad
  \text{column degree}=3
\]
for the present matrices.
The standard parameters are
\[
  \text{iters}=1000,\qquad
  \text{damping}=0.3.
\]
If BP does not converge to the target syndrome, a fallback run with zero damping is attempted.

Table~\ref{tab:pp-algorithms} summarizes the post-processing rules.
They do not use the true error; they use only the visible syndrome, log-likelihood ratios, hard-decision sequence, and the lift structure.
The active rules are deliberately low-complexity post-processing rules: local linear solves on decoder-visible candidate sets, prefix-size searches over suspicious bits, flip-history candidates, short-path closures, single-column repairs, syndrome-2 core repairs, and exact or beam-style searches only for residual syndromes with one to four unsatisfied checks.
We do not run a global low-weight syndrome solver over all qubits as part of the reported decoder.
We do not count any post-hoc logical-bank correction as decoder post-processing.
Such a correction can be constructed in a replay analysis by comparing a syndrome-valid decoder output with the saved true error, extracting the residual nontrivial logical operator, and adding that same operator back to the same recorded failure.
This uses information obtained from the failure being corrected.
It is therefore used only to diagnose failure types or to certify an upper bound on the minimum distance, and it is not included in the FER measurements reported below.
For syndrome-2 residuals, we additionally look for small error-trapping-set core patterns.
Here ``core'' means the induced small residual subgraph pattern used by the repair rule.
When only one to four unsatisfied checks remain, we switch to an exact or beam-style search on decoder-visible candidate sets, mainly the corresponding small Tanner neighborhood.
Thus the implemented rule is local: it is meant to repair small residual syndromes that have a low-weight explanation visible from the BP output and the lift graph.
A small residual weight measured later by comparing with the saved true error is evidence that a stronger global low-weight syndrome solver could repair the replay case, but it is not itself information available to the decoder.

\begin{table}[H]
\centering
\small
\caption{Post-processing rules used after belief propagation. The active rules use only the decoder output, the syndrome, the log-likelihood ratios, and the lift structure. The final row records a replay-only diagnostic that is explicitly excluded from the decoder.}
\label{tab:pp-algorithms}
\begin{tabular}{L{0.18\linewidth} L{0.25\linewidth} L{0.41\linewidth}}
\toprule
Rule & Main target & Description \\
\midrule
Local linear solve &
Small residual unsatisfied syndromes &
Collects low-LLR bits near the unsatisfied checks and solves the induced linear system on that candidate set. This is the default first post-processing step. \\
Prefix-size search &
Unknown suspicious-set size &
Orders all bits by suspiciousness and uses bisection to find the smallest prefix size $K$ for which the syndrome difference is solvable. \\
Diagnostic prefix search &
Boundary between unique and non-unique local solutions &
Extends the prefix-size search by probing the transition between unique and non-unique local solution regimes. \\
Flip-history candidate set &
Oscillatory BP trajectories &
Uses the bits whose hard decisions flipped during BP iterations as the candidate set. If this history set is insufficient, it falls back to a lightweight OSD step. \\
Path-based closure &
Short lifted paths between residual checks &
Builds a candidate set from short paths connecting residual unsatisfied checks and from the closure around those paths. \\
Common-column correction &
Three unsatisfied checks explained by one column &
If three unsatisfied checks are incident to a common column, this method cancels the syndrome by flipping that single bit. \\
Syndrome-2 core repair &
Syndrome-2 error-trapping-set cores &
Matches a two-unsatisfied residual against lifted error-trapping-set templates and repairs it using the corresponding core support. \\
Small-residual exact search &
Only one to four unsatisfied checks remain &
Runs an exact or beam-style search on decoder-visible candidate sets, mainly small Tanner neighborhoods, to find a low-weight correction that cancels the small residual syndrome.  This is not a global low-weight syndrome solver, so a replay case can still remain failed even if the residual against the saved true error is small. \\
Excluded replay diagnostic &
Saved failures with known true errors &
Not used as decoder post-processing.  A nontrivial logical operator extracted by comparing a syndrome-valid replay output with the saved true error is used only as a distance upper-bound witness or failure diagnostic, never as a FER-improving correction. \\
\bottomrule
\end{tabular}
\end{table}

\section{Finite-Length Frame Error Rate Measurements}\label{sec:fer-measurements}
This section reports the measured decoding performance for the constructed code.
It defines the reference lines used in the FER figure, states how the plotted points should be interpreted, and separates these finite-sample measurements from any asymptotic threshold claim.

Figure~\ref{fig:fer-p64} summarizes the decoding measurements for the constructed 64-fold lift.
The horizontal axis is the depolarizing probability $p$, and the vertical axis is the FER.
The hashing line is computed for the effective 64-fold-lift rate
\[
  R=4108/10240=0.401171875
\]
by solving the depolarizing-channel quantum hashing equation
\[
  1-h_2(p)-p\log_2 3=R.
\]
The density-evolution (DE) line is an approximate binary-input BP density-evolution reference for the regular $(3,10)$ ensemble, not a finite-length threshold of this particular 64-fold-lift code.
The use of hashing-bound and density-evolution reference lines is consistent with recent finite-length quantum LDPC decoding studies, but the present measurements are only for the single lifted code constructed here \cite{komoto2025codingbound,kasai2025degeneracy,komoto2025sharp,kasai2026orthogonality}.
The values used here are
\[
  p_{\mathrm{hash}}(R=4108/10240)=0.09403285,
  \qquad
  p_{\mathrm{DE}}^{(3,10)}\approx0.0733.
\]

\begin{figure}[H]
  \centering
  \includegraphics[width=0.95\linewidth]{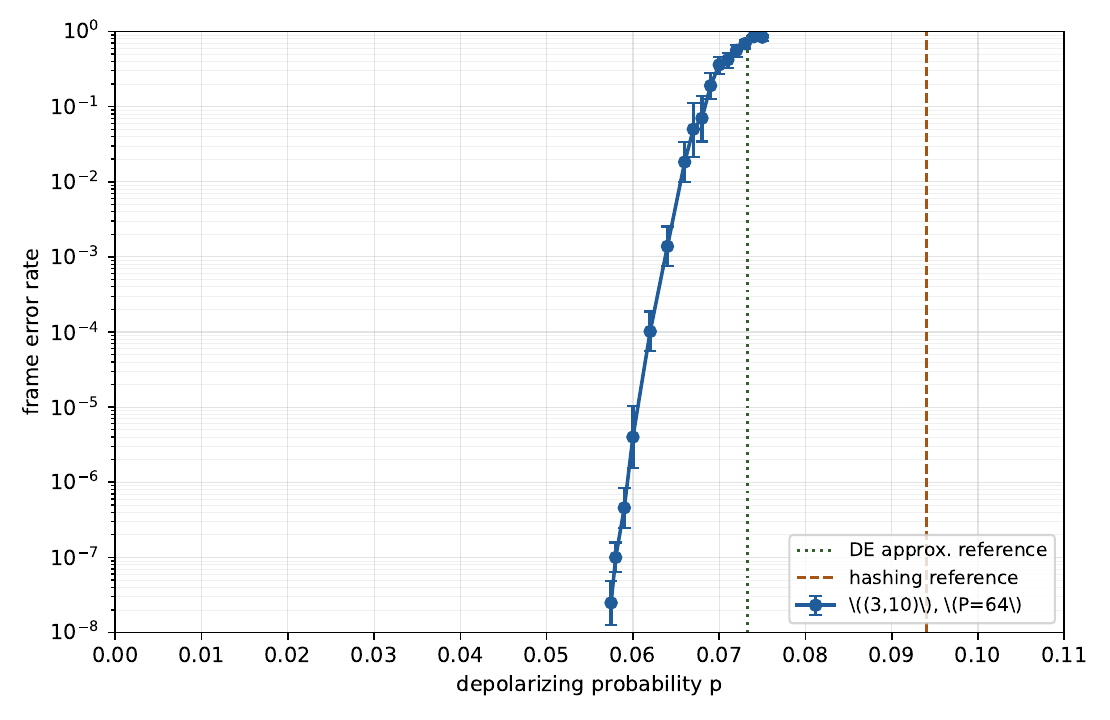}
  \caption{FER of the 64-fold lift for the \([[n,k,d]]=[[10240,4108,\,18\le d\le32]]\) CSS code. The hashing line is the depolarizing-channel quantum hashing bound for the effective rate $R=4108/10240$. The DE line is an approximate BP density-evolution reference for the regular $(3,10)$ ensemble.}
  \label{fig:fer-p64}
\end{figure}

In the $p=0.058$ run, joint BP with post-processing produced 25 recorded failures.
The run contained $180{,}000{,}000$ trials.
The full post-processing rules correct 7 of these failures by the small-residual exact-search rule, leaving 18 failures after post-processing.
The corrected FER is therefore
\[
  18/180{,}000{,}000=1.0\times10^{-7}.
\]
This value uses only the deterministic post-processing rules in Table~\ref{tab:pp-algorithms}; it does not include any post-hoc logical-bank correction extracted from the recorded failures.

At the low-$p$ side, the data include several measured points below and around \(p=0.060\), while higher-$p$ points are used to show the transition range.
These points should be read as finite-sample decoding measurements for the specific 64-fold lift and post-processing rules, not as an asymptotic threshold estimate.
The finite-sample FER measurements do not identify the minimum distance within the certified interval $18\le d\le32$.
In particular, the absence of a smaller observed logical residual is not used to infer that the distance is close to $32$.

\section{Discussion}
The paper should be read first as a base-construction paper.
The two-branch finite-field rule gives a structured way to build regular CSS LDPC bases, and the quotient-coset conditions make regularity, CSS orthogonality, and same-type 4-cycle exclusion checkable by finite computation.
Table~\ref{tab:coefficient-feasibility-examples} records that this construction method is not limited to the detailed \((3,10)\) case.

The \((3,10)\) base is the finite-length case carried through the later stages of the paper.
For that case, the base incidence pattern gives exact \((3,10)\) regularity, CSS orthogonality, and the absence of same-type 4-cycles.
The 64-fold cyclic lift prevents the remaining same-type base 6-cycles from closing, and the same lift labels are constrained to exclude the known weight-16 \(Z\)-type logical-support orbit.

These certificates do not prove the exact minimum distance.
They show that the detailed finite-length example excludes the specified short-cycle condition and the specified weight-16 support pattern, while complete enumeration through weight $16$ and certified logical witnesses give the finite bounds \(18\le d\le32\).
The decoding data then give a separate performance measurement for the verified lift under joint BP and deterministic post-processing.
The exact distance within the certified interval remains undetermined.
For the other \((J,L)\)-regular bases in Table~\ref{tab:coefficient-feasibility-examples}, the same two-branch construction already supplies the base matrices; choosing lifts, certifying distance-relevant structures, and measuring finite-length decoding performance for those parameter pairs are left for future work.
One concrete plan is to strengthen the base stage itself.
If one can construct regular CSS bases whose same-type Tanner graphs already have girth at least eight, then the lift stage no longer has to impose nonzero congruence constraints for base 6-cycle exclusion.
The cyclic-lift search would then mainly enforce the zero congruence constraints required for CSS orthogonality, together with a smaller set of nonzero constraints for explicitly targeted low-weight support patterns.
This would make the lift-label problem easier and would provide a cleaner route to larger finite-length examples.
Another direction is to remove the finite-field restriction from the base construction.
The proofs indicate that the essential object is a finite translation set with a free action on the relevant nonzero differences, not necessarily a field.
Developing the corresponding orbit-based construction over finite groups, rings, or modules may produce further regular CSS bases beyond those accessible through multiplicative subgroups of finite fields.

\appendix
\section{Computer-Assisted Distance Lower Bounds}\label{sec:distance-enumeration}
For a binary CSS pair, the distances are the minimum weights in
$\ker H_Z\setminus\operatorname{row}(H_X)$ and
$\ker H_X\setminus\operatorname{row}(H_Z)$, respectively.
The computation tests membership in the stabilizer row space by exact binary elimination.
Light stabilizers are retained in the code and are not mistaken for logical operators.
Since every column of each matrix has odd weight three, summing all check equations shows that every kernel vector has even weight.

A minimum-weight nontrivial logical support can be taken to be connected under any pairing of its incident edges at each check.
Indeed, each component of a disconnected pairing is itself a kernel support.
If all components are stabilizers, so is their sum; otherwise a smaller nontrivial component contradicts minimality.
Similarly, a minimum nontrivial support cannot strictly contain a nonzero stabilizer support, since adding that stabilizer removes its support and decreases the weight.
These observations justify restricting the search to minimal connected supports and stopping a branch when its current support is already a stabilizer.

Each column meets exactly one check in each of three check-row groups.
Pairing support variables at each check gives three colored perfect matchings on the support.
The absence of Tanner 4- and 6-cycles implies that their union is a simple triangle-free cubic graph.
This also excludes nonzero kernel supports of weights two and four.
Pairings at checks meeting four or more support variables are included.
We enumerate connected graphs of this form, retaining one representative of each color-preserving isomorphism class, and embed them into the actual check matrix.
The first two matchings are enumerated as alternating even cycles, and the third matching is enumerated over the remaining admissible pairs.
At weights $6,8,10,12,14,16$, the complete lists contain
$1,10,22,226,1838,25375$ patterns, respectively.
The embedding procedure branches over compatible columns and rejects repeated columns.
With two assigned neighbors of different colors, the absence of a Tanner 4-cycle gives at most one compatible next column.
Every completed embedding is tested for nonmembership in the stabilizer row space.

The simultaneous cyclic shift of all lift indices preserves both parity-check matrices up to row permutations.
It therefore preserves the corresponding stabilizer row spaces as well as the kernels.
The search uses one root representative per base column under this verified symmetry.
All candidate supports are checked on the full lifted matrices.
Only fully completed weight levels contribute to a lower bound; a time-limited partial search contributes none.
For the selected $P=64$ FER-experiment lift, both sides complete every pattern through weight $16$, using $160$ root representatives per side. No nontrivial logical support is found. Together with the even-weight property this proves $d_X,d_Z\ge18$.
For the earlier externally published lift, the $X$-side search completes weight $16$ and the $Z$-side search completes weight $14$ before finding a weight-$16$ logical representative. The two computations use different coefficient files and must not be conflated.
The ancillary distance records include the input coefficients, source code, complete pattern lists where applicable, completed-weight logs, and file hashes.
The lower bounds are computer-assisted exhaustive-search results.
The explicit upper-bound representatives can be verified separately by syndrome and stabilizer checks.

\begingroup
\raggedright
\bibliographystyle{IEEEtran}
\bibliography{refs}
\endgroup

\end{document}